\documentclass[journal,10pt,twoside,twocolumn]{IEEEtran}

\IEEEoverridecommandlockouts

 \usepackage{algorithm}
 \usepackage{algpseudocode}
\usepackage{amsmath,amsfonts} 
\usepackage{amssymb, amsthm}
\usepackage{graphicx}
\usepackage{subfigure}
\usepackage{booktabs}
\usepackage{multirow}
\usepackage{mathrsfs}
\usepackage{cite}
\usepackage{units}
\usepackage{cancel}
\usepackage{upgreek}
\usepackage{color}
\DeclareGraphicsExtensions{.pdf,.jpeg,.png,.epbibdesks}
\usepackage{epstopdf}
\usepackage{microtype}
\usepackage{balance}
\usepackage{paralist}
\usepackage{framed}
\usepackage{enumitem}
\usepackage{hyperref}

\usepackage{afterpage}

\usepackage{nicefrac}

\usepackage{adjustbox}

\usepackage{alphalph}
%

%
%
%
%
%

\usepackage{amssymb}
\usepackage{amsfonts}
\usepackage{mathrsfs}
\usepackage{xspace}
\usepackage{bm}
\usepackage{upgreek}

\newcommand{\safemath}[2]{\newcommand{#1}{\ensuremath{#2}\xspace}}



\safemath{\bma}{\mathbf{a}}
\safemath{\bmb}{\mathbf{b}}
\safemath{\bmc}{\mathbf{c}}
\safemath{\bmd}{\mathbf{d}}
\safemath{\bme}{\mathbf{e}}
\safemath{\bmf}{\mathbf{f}}
\safemath{\bmg}{\mathbf{g}}
\safemath{\bmh}{\mathbf{h}}
\safemath{\bmi}{\mathbf{i}}
\safemath{\bmj}{\mathbf{j}}
\safemath{\bmk}{\mathbf{k}}
\safemath{\bml}{\mathbf{l}}
\safemath{\bmm}{\mathbf{m}}
\safemath{\bmn}{\mathbf{n}}
\safemath{\bmo}{\mathbf{o}}
\safemath{\bmp}{\mathbf{p}}
\safemath{\bmq}{\mathbf{q}}
\safemath{\bmr}{\mathbf{r}}
\safemath{\bms}{\mathbf{s}}
\safemath{\bmt}{\mathbf{t}}
\safemath{\bmu}{\mathbf{u}}
\safemath{\bmv}{\mathbf{v}}
\safemath{\bmw}{\mathbf{w}}
\safemath{\bmx}{\mathbf{x}}
\safemath{\bmy}{\mathbf{y}}
\safemath{\bmz}{\mathbf{z}}
\safemath{\bmzero}{\mathbf{0}}
\safemath{\bmone}{\mathbf{1}}

\bmdefine{\biad}{a}
\bmdefine{\bibd}{b}
\bmdefine{\bicd}{c}
\bmdefine{\bidd}{d}
\bmdefine{\bied}{e}
\bmdefine{\bifd}{f}
\bmdefine{\bigd}{g}
\bmdefine{\bihd}{h}
\bmdefine{\biid}{i}
\bmdefine{\bijd}{j}
\bmdefine{\bikd}{k}
\bmdefine{\bild}{l}
\bmdefine{\bimd}{m}
\bmdefine{\bind}{n}
\bmdefine{\biod}{o}
\bmdefine{\bipd}{p}
\bmdefine{\biqd}{q}
\bmdefine{\bird}{r}
\bmdefine{\bisd}{s}
\bmdefine{\bitd}{t}
\bmdefine{\biud}{u}
\bmdefine{\bivd}{v}
\bmdefine{\biwd}{w}
\bmdefine{\bixd}{x}
\bmdefine{\biyd}{y}
\bmdefine{\bizd}{z}

\bmdefine{\bixid}{\xi}
\bmdefine{\bilambdad}{\lambda}
\bmdefine{\bimud}{\mu}
\bmdefine{\bithetad}{\theta}
\bmdefine{\biphid}{\phi}
\bmdefine{\bideltad}{\delta}

\safemath{\bmia}{\biad}
\safemath{\bmib}{\bibd}
\safemath{\bmic}{\bicd}
\safemath{\bmid}{\bidd}
\safemath{\bmie}{\bied}
\safemath{\bmif}{\bifd}
\safemath{\bmig}{\bigd}
\safemath{\bmih}{\bihd}
\safemath{\bmii}{\biid}
\safemath{\bmij}{\bijd}
\safemath{\bmik}{\bikd}
\safemath{\bmil}{\bild}
\safemath{\bmim}{\bimd}
\safemath{\bmin}{\bind}
\safemath{\bmio}{\biod}
\safemath{\bmip}{\bipd}
\safemath{\bmiq}{\biqd}
\safemath{\bmir}{\bird}
\safemath{\bmis}{\bisd}
\safemath{\bmit}{\bitd}
\safemath{\bmiu}{\biud}
\safemath{\bmiv}{\bivd}
\safemath{\bmiw}{\biwd}
\safemath{\bmix}{\bixd}
\safemath{\bmiy}{\biyd}
\safemath{\bmiz}{\bizd}

\safemath{\bmxi}{\bixid}
\safemath{\bmlambda}{\bilambdad}
\safemath{\bmmu}{\bimud}
\safemath{\bmtheta}{\bithetad}
\safemath{\bmphi}{\biphid}
\safemath{\bmdelta}{\bideltad}

\safemath{\bA}{\mathbf{A}}
\safemath{\bB}{\mathbf{B}}
\safemath{\bC}{\mathbf{C}}
\safemath{\bD}{\mathbf{D}}
\safemath{\bE}{\mathbf{E}}
\safemath{\bF}{\mathbf{F}}
\safemath{\bG}{\mathbf{G}}
\safemath{\bH}{\mathbf{H}}
\safemath{\bI}{\mathbf{I}}
\safemath{\bJ}{\mathbf{J}}
\safemath{\bK}{\mathbf{K}}
\safemath{\bL}{\mathbf{L}}
\safemath{\bM}{\mathbf{M}}
\safemath{\bN}{\mathbf{N}}
\safemath{\bO}{\mathbf{O}}
\safemath{\bP}{\mathbf{P}}
\safemath{\bQ}{\mathbf{Q}}
\safemath{\bR}{\mathbf{R}}
\safemath{\bS}{\mathbf{S}}
\safemath{\bT}{\mathbf{T}}
\safemath{\bU}{\mathbf{U}}
\safemath{\bV}{\mathbf{V}}
\safemath{\bW}{\mathbf{W}}
\safemath{\bX}{\mathbf{X}}
\safemath{\bY}{\mathbf{Y}}
\safemath{\bZ}{\mathbf{Z}}

\safemath{\bZero}{\mathbf{0}}
\safemath{\bOne}{\mathbf{1}}
\safemath{\bDelta}{\mathbf{\Delta}}
\safemath{\bLambda}{\mathbf{\UpLambda}}
\safemath{\bPhi}{\mathbf{\Upphi}}
\safemath{\bSigma}{\mathbf{\Upsigma}}
\safemath{\bOmega}{\mathbf{\Upomega}}
\safemath{\bTheta}{\mathbf{\Uptheta}}

\bmdefine{\biAd}{A}
\bmdefine{\biBd}{B}
\bmdefine{\biCd}{C}
\bmdefine{\biDd}{D}
\bmdefine{\biEd}{E}
\bmdefine{\biFd}{F}
\bmdefine{\biGd}{G}
\bmdefine{\biHd}{H}
\bmdefine{\biId}{I}
\bmdefine{\biJd}{J}
\bmdefine{\biKd}{K}
\bmdefine{\biLd}{L}
\bmdefine{\biMd}{M}
\bmdefine{\biOd}{N}
\bmdefine{\biPd}{O}
\bmdefine{\biQd}{P}
\bmdefine{\biRd}{R}
\bmdefine{\biSd}{S}
\bmdefine{\biTd}{T}
\bmdefine{\biUd}{U}
\bmdefine{\biVd}{V}
\bmdefine{\biWd}{W}
\bmdefine{\biXd}{X}
\bmdefine{\biYd}{Y}
\bmdefine{\biZd}{Z}

\bmdefine{\biDelta}{\Delta}
\bmdefine{\biLambda}{\Lambda}
\bmdefine{\biPhi}{\Phi}
\bmdefine{\biSigma}{\Sigma}
\bmdefine{\biOmega}{\Omega}
\bmdefine{\biTheta}{\Theta}

\safemath{\bimA}{\biAd}
\safemath{\bimB}{\biBd}
\safemath{\bimC}{\biCd}
\safemath{\bimD}{\biDd}
\safemath{\bimE}{\biEd}
\safemath{\bimF}{\biFd}
\safemath{\bimG}{\biGd}
\safemath{\bimH}{\biHd}
\safemath{\bimI}{\biId}
\safemath{\bimJ}{\biJd}
\safemath{\bimK}{\biKd}
\safemath{\bimL}{\biLd}
\safemath{\bimM}{\biMd}
\safemath{\bimN}{\biNd}
\safemath{\bimO}{\biOd}
\safemath{\bimP}{\biPd}
\safemath{\bimQ}{\biQd}
\safemath{\bimR}{\biRd}
\safemath{\bimS}{\biSd}
\safemath{\bimT}{\biTd}
\safemath{\bimU}{\biUd}
\safemath{\bimV}{\biVd}
\safemath{\bimW}{\biWd}
\safemath{\bimX}{\biXd}
\safemath{\bimY}{\biYd}
\safemath{\bimZ}{\biZd}

\safemath{\bimDelta}{\biDelta}
\safemath{\bimLambda}{\biLambda}
\safemath{\bimPhi}{\biPhi}
\safemath{\bimSigma}{\biSigma}
\safemath{\bimOmega}{\biOmega}
\safemath{\bimTheta}{\biTheta}

\safemath{\setA}{\mathcal{A}}
\safemath{\setB}{\mathcal{B}}
\safemath{\setC}{\mathcal{C}}
\safemath{\setD}{\mathcal{D}}
\safemath{\setE}{\mathcal{E}}
\safemath{\setF}{\mathcal{F}}
\safemath{\setG}{\mathcal{G}}
\safemath{\setH}{\mathcal{H}}
\safemath{\setI}{\mathcal{I}}
\safemath{\setJ}{\mathcal{J}}
\safemath{\setK}{\mathcal{K}}
\safemath{\setL}{\mathcal{L}}
\safemath{\setM}{\mathcal{M}}
\safemath{\setN}{\mathcal{N}}
\safemath{\setO}{\mathcal{O}}
\safemath{\setP}{\mathcal{P}}
\safemath{\setQ}{\mathcal{Q}}
\safemath{\setR}{\mathcal{R}}
\safemath{\setS}{\mathcal{S}}
\safemath{\setT}{\mathcal{T}}
\safemath{\setU}{\mathcal{U}}
\safemath{\setV}{\mathcal{V}}
\safemath{\setW}{\mathcal{W}}
\safemath{\setX}{\mathcal{X}}
\safemath{\setY}{\mathcal{Y}}
\safemath{\setZ}{\mathcal{Z}}
\safemath{\emptySet}{\varnothing}

\safemath{\colA}{\mathscr{A}}
\safemath{\colB}{\mathscr{B}}
\safemath{\colC}{\mathscr{C}}
\safemath{\colD}{\mathscr{D}}
\safemath{\colE}{\mathscr{E}}
\safemath{\colF}{\mathscr{F}}
\safemath{\colG}{\mathscr{G}}
\safemath{\colH}{\mathscr{H}}
\safemath{\colI}{\mathscr{I}}
\safemath{\colJ}{\mathscr{J}}
\safemath{\colK}{\mathscr{K}}
\safemath{\colL}{\mathscr{L}}
\safemath{\colM}{\mathscr{M}}
\safemath{\colN}{\mathscr{N}}
\safemath{\colO}{\mathscr{O}}
\safemath{\colP}{\mathscr{P}}
\safemath{\colQ}{\mathscr{Q}}
\safemath{\colR}{\mathscr{R}}
\safemath{\colS}{\mathscr{S}}
\safemath{\colT}{\mathscr{T}}
\safemath{\colU}{\mathscr{U}}
\safemath{\colV}{\mathscr{V}}
\safemath{\colW}{\mathscr{W}}
\safemath{\colX}{\mathscr{X}}
\safemath{\colY}{\mathscr{Y}}
\safemath{\colZ}{\mathscr{Z}}

\safemath{\opA}{\mathbb{A}}
\safemath{\opB}{\mathbb{B}}
\safemath{\opC}{\mathbb{C}}
\safemath{\opD}{\mathbb{D}}
\safemath{\opE}{\mathbb{E}}
\safemath{\opF}{\mathbb{F}}
\safemath{\opG}{\mathbb{G}}
\safemath{\opH}{\mathbb{H}}
\safemath{\opI}{\mathbb{I}}
\safemath{\opJ}{\mathbb{J}}
\safemath{\opK}{\mathbb{K}}
\safemath{\opL}{\mathbb{L}}
\safemath{\opM}{\mathbb{M}}
\safemath{\opN}{\mathbb{N}}
\safemath{\opO}{\mathbb{O}}
\safemath{\opP}{\mathbb{P}}
\safemath{\opQ}{\mathbb{Q}}
\safemath{\opR}{\mathbb{R}}
\safemath{\opS}{\mathbb{S}}
\safemath{\opT}{\mathbb{T}}
\safemath{\opU}{\mathbb{U}}
\safemath{\opV}{\mathbb{V}}
\safemath{\opW}{\mathbb{W}}
\safemath{\opX}{\mathbb{X}}
\safemath{\opY}{\mathbb{Y}}
\safemath{\opZ}{\mathbb{Z}}
\safemath{\opZero}{\mathbb{O}}
\safemath{\identityop}{\opI}


\safemath{\veca}{\bma}
\safemath{\vecb}{\bmb}
\safemath{\vecc}{\bmc}
\safemath{\vecd}{\bmd}
\safemath{\vece}{\bme}
\safemath{\vecf}{\bmf}
\safemath{\vecg}{\bmg}
\safemath{\vech}{\bmh}
\safemath{\veci}{\bmi}
\safemath{\vecj}{\bmj}
\safemath{\veck}{\bmk}
\safemath{\vecl}{\bml}
\safemath{\vecm}{\bmm}
\safemath{\vecn}{\bmn}
\safemath{\veco}{\bmo}
\safemath{\vecp}{\bmp}
\safemath{\vecq}{\bmq}
\safemath{\vecr}{\bmr}
\safemath{\vecs}{\bms}
\safemath{\vect}{\bmt}
\safemath{\vecu}{\bmu}
\safemath{\vecv}{\bmv}
\safemath{\vecw}{\bmw}
\safemath{\vecx}{\bmx}
\safemath{\vecy}{\bmy}
\safemath{\vecz}{\bmz}

\safemath{\veczero}{\bmzero}
\safemath{\vecone}{\bmone}
\safemath{\vecxi}{\bmxi}
\safemath{\veclambda}{\bmlambda}
\safemath{\vecmu}{\bmmu}
\safemath{\vectheta}{\bmtheta}
\safemath{\vecphi}{\bmphi}
\safemath{\vecdelta}{\bmdelta}

\safemath{\matA}{\bA}
\safemath{\matB}{\bB}
\safemath{\matC}{\bC}
\safemath{\matD}{\bD}
\safemath{\matE}{\bE}
\safemath{\matF}{\bF}
\safemath{\matG}{\bG}
\safemath{\matH}{\bH}
\safemath{\matI}{\bI}
\safemath{\matJ}{\bJ}
\safemath{\matK}{\bK}
\safemath{\matL}{\bL}
\safemath{\matM}{\bM}
\safemath{\matN}{\bN}
\safemath{\matO}{\bO}
\safemath{\matP}{\bP}
\safemath{\matQ}{\bQ}
\safemath{\matR}{\bR}
\safemath{\matS}{\bS}
\safemath{\matT}{\bT}
\safemath{\matU}{\bU}
\safemath{\matV}{\bV}
\safemath{\matW}{\bW}
\safemath{\matX}{\bX}
\safemath{\matY}{\bY}
\safemath{\matZ}{\bZ}
\safemath{\matzero}{\bmzero}

\safemath{\matDelta}{\bDelta}
\safemath{\matLambda}{\bLambda}
\safemath{\matPhi}{\bPhi}
\safemath{\matSigma}{\bSigma}
\safemath{\matOmega}{\bOmega}
\safemath{\matTheta}{\bTheta}

\safemath{\matidentity}{\matI}
\safemath{\matone}{\matO}


\safemath{\rnda}{A}
\safemath{\rndb}{B}
\safemath{\rndc}{C}
\safemath{\rndd}{D}
\safemath{\rnde}{E}
\safemath{\rndf}{F}
\safemath{\rndg}{G}
\safemath{\rndh}{H}
\safemath{\rndi}{I}
\safemath{\rndj}{J}
\safemath{\rndk}{K}
\safemath{\rndl}{L}
\safemath{\rndm}{M}
\safemath{\rndn}{N}
\safemath{\rndo}{O}
\safemath{\rndp}{P}
\safemath{\rndq}{Q}
\safemath{\rndr}{R}
\safemath{\rnds}{S}
\safemath{\rndt}{T}
\safemath{\rndu}{U}
\safemath{\rndv}{V}
\safemath{\rndw}{W}
\safemath{\rndx}{X}
\safemath{\rndy}{Y}
\safemath{\rndz}{Z}

\safemath{\rveca}{\bimA}
\safemath{\rvecb}{\bimB}
\safemath{\rvecc}{\bimC}
\safemath{\rvecd}{\bimD}
\safemath{\rvece}{\bimE}
\safemath{\rvecf}{\bimF}
\safemath{\rvecg}{\bimG}
\safemath{\rvech}{\bimH}
\safemath{\rveci}{\bimI}
\safemath{\rvecj}{\bimJ}
\safemath{\rveck}{\bimK}
\safemath{\rvecl}{\bimL}
\safemath{\rvecm}{\bimM}
\safemath{\rvecn}{\bimN}
\safemath{\rveco}{\bomO}
\safemath{\rvecp}{\bimP}
\safemath{\rvecq}{\bimQ}
\safemath{\rvecr}{\bimR}
\safemath{\rvecs}{\bimS}
\safemath{\rvect}{\bimT}
\safemath{\rvecu}{\bimU}
\safemath{\rvecv}{\bimV}
\safemath{\rvecw}{\bimW}
\safemath{\rvecx}{\bimX}
\safemath{\rvecy}{\bimY}
\safemath{\rvecz}{\bimZ}

\safemath{\rvecxi}{\bmxi}
\safemath{\rveclambda}{\bmlambda}
\safemath{\rvecmu}{\bmmu}
\safemath{\rvectheta}{\bmtheta}
\safemath{\rvecphi}{\bmphi}

\safemath{\rmatA}{\bimA}
\safemath{\rmatB}{\bimB}
\safemath{\rmatC}{\bimC}
\safemath{\rmatD}{\bimD}
\safemath{\rmatE}{\bimE}
\safemath{\rmatF}{\bimF}
\safemath{\rmatG}{\bimG}
\safemath{\rmatH}{\bimH}
\safemath{\rmatI}{\bimI}
\safemath{\rmatJ}{\bimJ}
\safemath{\rmatK}{\bimK}
\safemath{\rmatL}{\bimL}
\safemath{\rmatM}{\bimM}
\safemath{\rmatN}{\bimN}
\safemath{\rmatO}{\bimO}
\safemath{\rmatP}{\bimP}
\safemath{\rmatQ}{\bimQ}
\safemath{\rmatR}{\bimR}
\safemath{\rmatS}{\bimS}
\safemath{\rmatT}{\bimT}
\safemath{\rmatU}{\bimU}
\safemath{\rmatV}{\bimV}
\safemath{\rmatW}{\bimW}
\safemath{\rmatX}{\bimX}
\safemath{\rmatY}{\bimY}
\safemath{\rmatZ}{\bimZ}

\safemath{\rmatDelta}{\bimDelta}
\safemath{\rmatLambda}{\bimLambda}
\safemath{\rmatPhi}{\bimPhi}
\safemath{\rmatSigma}{\bimSigma}
\safemath{\rmatOmega}{\bimOmega}
\safemath{\rmatTheta}{\bimTheta}

%
%

\usepackage{amssymb}
\usepackage{amsfonts}
\usepackage{mathrsfs}
\usepackage{xspace}
\usepackage{bm}
\usepackage{fancyref}
\usepackage{textcomp}

\usepackage{multirow}
\usepackage{stmaryrd}


\newenvironment{textbmatrix}{	\setlength{\arraycolsep}{2.5pt}%
								\big[\begin{matrix}}{\end{matrix}\big]%
								\raisebox{0.08ex}{\vphantom{M}}}


\def\be{\begin{equation}}
\def\ee{\end{equation}}
\def\een{\nonumber \end{equation}}
\def\mat{\begin{bmatrix}}
\def\emat{\end{bmatrix}}
\def\btm{\begin{textbmatrix}}
\def\etm{\end{textbmatrix}}

\def\ba#1\ea{\begin{align}#1\end{align}}
\def\bas#1\eas{\begin{align*}#1\end{align*}}
\def\bs#1\es{\begin{split}#1\end{split}} 
\def\bg#1\eg{\begin{gather}#1\end{gather}}
\def\bml#1\eml{\begin{multline}#1\end{multline}}
\def\bi#1\ei{\begin{itemize}#1\end{itemize}}



\newcommand{\lefto}{\mathopen{}\left}



\DeclareMathOperator*{\argmin}{arg\;min}		
\DeclareMathOperator{\Exop}{\opE}			
\DeclareMathOperator{\grad}{\nabla}			


\newcommand{\Ex}[2]{\ensuremath{\Exop_{#1}\lefto[#2\right]}} 	




\safemath{\dirac}{\delta}					
\safemath{\krond}{\dirac}					

\safemath{\upto}{\uparrow}
\safemath{\downto}{\downarrow}
\safemath{\iu}{j}							
\safemath{\ev}{\lambda}						
\safemath{\hilseqspace}{l^{2}}				
\newcommand{\banachfunspace}[1]{\setL^{#1}}	
\safemath{\hilfunspace}{\banachfunspace{2}}	

\safemath{\SNR}{\textsf{SNR}} 				
\safemath{\PAR}{\textsf{PAR}} 				
\safemath{\No}{N_0}							
\safemath{\Es}{E_s}							
\safemath{\Eb}{E_b}							
\safemath{\EbNo}{\frac{\Eb}{\No}}
\safemath{\EsNo}{\frac{\Es}{\No}}

\DeclareMathOperator{\CHop}{\ensuremath{\opH}} 
\safemath{\tvir}{\rndh_{\CHop}}				
\safemath{\tvtf}{\rndl_{\CHop}}				
\safemath{\spf}{\rnds_{\CHop}}				
\safemath{\bff}{H_{\CHop}}					

\safemath{\ircf}{r_{h}}						
\safemath{\tftvcf}{r_{s}}					
\safemath{\tfcf}{r_{l}}						
\safemath{\bfcf}{r_{H}}						

\safemath{\tcorr}{c_h}						
\safemath{\scf}{c_{s}}						
\safemath{\tfcorr}{c_{l}}					
\safemath{\fcorr}{c_{H}}						

\safemath{\mi}{I}							
\safemath{\capacity}{C}						

\safemath{\normal}{\mathcal{N}}			
\safemath{\jpg}{\mathcal{CN}}			
\safemath{\mchain}{\leftrightarrow}		

\safemath{\dB}{\,\mathrm{dB}}
\safemath{\dBm}{\,\mathrm{dBm}}
\safemath{\Hz}{\,\mathrm{Hz}}
\safemath{\kHz}{\,\mathrm{kHz}}
\safemath{\MHz}{\,\mathrm{MHz}}
\safemath{\GHz}{\,\mathrm{GHz}}
\safemath{\s}{\,\mathrm{s}}
\safemath{\ms}{\,\mathrm{ms}}
\safemath{\mus}{\,\mathrm{\text{\textmu}s}}
\safemath{\ns}{\,\mathrm{ns}}
\safemath{\ps}{\,\mathrm{ps}}
\safemath{\meter}{\,\mathrm{m}}
\safemath{\mm}{\,\mathrm{mm}}
\safemath{\cm}{\,\mathrm{cm}}
\safemath{\m}{\,\mathrm{m}}
\safemath{\W}{\,\mathrm{W}}
\safemath{\mW}{\, \mathrm{mW}}
\safemath{\J}{\,\mathrm{J}}
\safemath{\K}{\,\mathrm{K}}
\safemath{\bit}{\,\mathrm{bit}}
\safemath{\nat}{\,\mathrm{nat}}


\safemath{\define}{\triangleq}			

\safemath{\equivalent}{\sim}
\safemath{\distas}{\sim}					
\safemath{\sdiff}{\Delta}				

\safemath{\reals}{\mathbb{R}}
\safemath{\positivereals}{\reals_{+}}
\safemath{\integers}{\mathbb{Z}}
\safemath{\posint}{\integers_{+}}
\safemath{\naturals}{\mathbb{N}}
\safemath{\posnaturals}{\naturals_{+}}
\safemath{\complexset}{\mathbb{C}}
\safemath{\rationals}{\mathbb{Q}}

\newcommand*{\fancyrefapplabelprefix}{app}		
\newcommand*{\fancyrefthmlabelprefix}{thm}		
\newcommand*{\fancyreflemlabelprefix}{lem}		
\newcommand*{\fancyrefcorlabelprefix}{cor}		
\newcommand*{\fancyrefdeflabelprefix}{def}		
\newcommand*{\fancyrefproplabelprefix}{prop}	
\newcommand*{\fancyrefobslabelprefix}{obs}		
\newcommand*{\fancyrefalglabelprefix}{alg}		
\newcommand*{\fancyrefasmlabelprefix}{asm}	    
\newcommand*{\fancyreftbllabelprefix}{tbl}	    
\newcommand*{\fancyreftremabelprefix}{rem}	    

\frefformat{vario}{\fancyrefseclabelprefix}{Section~#1}
\frefformat{vario}{\fancyrefthmlabelprefix}{Theorem~#1}
\frefformat{vario}{\fancyreflemlabelprefix}{Lemma~#1}
\frefformat{vario}{\fancyrefcorlabelprefix}{Corollary~#1}
\frefformat{vario}{\fancyrefdeflabelprefix}{Definition~#1}
\frefformat{vario}{\fancyrefobslabelprefix}{Observation~#1}
\frefformat{vario}{\fancyrefasmlabelprefix}{Assumption~#1}
\frefformat{vario}{\fancyreffiglabelprefix}{Figure~#1}
\frefformat{vario}{\fancyrefapplabelprefix}{Appendix~#1} 
\frefformat{vario}{\fancyrefproplabelprefix}{Proposition~#1}
\frefformat{vario}{\fancyrefalglabelprefix}{Algorithm~#1}
\frefformat{vario}{\fancyrefeqlabelprefix}{(#1)}
\frefformat{vario}{\fancyreftbllabelprefix}{Table~#1}
\frefformat{vario}{\fancyreftremabelprefix}{Remark~#1}

\newtheorem{thm}{Theorem}

\newtheorem{lem}[thm]{Lemma} 

\newtheorem{rem}{Remark}


\safemath{\dictab}{[\,\dicta\,\,\dictb\,]}

\safemath{\ysig}{\bmy}
\safemath{\ysighat}{\hat{\ysig}}
\safemath{\ysigdim}{M}
\safemath{\xsig}{\bmx}
\safemath{\xsigdim}{N}
\safemath{\nx}{n_x}
\safemath{\zsig}{\bmz}
\safemath{\zsigdim}{\ysigdim}
\safemath{\rsig}{\bmr}
\safemath{\Adict}{\bA}
\safemath{\Adicttilde}{\widetilde{\Adict}}
\safemath{\Adictdim}{\outputdim\times\xsigdim}
\safemath{\avec}{\bma}
\safemath{\avectilde}{\tilde{\avec}}
\safemath{\Bdict}{\bB}
\safemath{\Bdicttilde}{\widetilde{\Bdict}}
\safemath{\Cdict}{\bC}
\safemath{\cvec}{\bmc}
\safemath{\Ddict}{\bD}
\safemath{\Ddictdim}{\ysigdim\times\xsigdim}
\safemath{\dvec}{\bmd}
\safemath{\Ddicttilde}{\widetilde{\bD}}
\safemath{\Bonb}{\bB}
\safemath{\bvec}{\bmb}
\safemath{\Bonbdim}{\ysigdim\times\ysigdim}
\safemath{\noise}{\bmn}
\safemath{\noisedim}{\ysigim}
\safemath{\err}{\bme}
\safemath{\errdim}{\ysigdim}
\safemath{\errset}{\setE}
\safemath{\nerr}{n_e}
\safemath{\delop}{\bP_\errset}
\safemath{\delopc}{\bP_{{\errset}^c}}

%

\safemath{\cplxi}{\imath}
\safemath{\cplxj}{\jmath}

\safemath{\dict}{\matD}
\safemath{\inputdim}{N}		
\safemath{\outputdim}{M}		
\safemath{\sparsity}{S}	
\safemath{\inputdimA}{{N_a}}	
\safemath{\inputdimB}{{N_b}}	
\safemath{\elemA}{{n_a}}	
\safemath{\elemB}{{n_b}}	
\safemath{\resA}{\matR_a}	
\safemath{\resB}{\matR_b}	
\safemath{\subD}{\matS} 
\safemath{\subA}{\matS_a} 
\safemath{\subB}{\matS_b} 
\safemath{\dicta}{\matA} 	
\safemath{\dictb}{\matB} 	
\safemath{\hollowS}{H}
\safemath{\hollowA}{H_a}
\safemath{\hollowB}{H_b}
\safemath{\cross}{Z}
\safemath{\coh}{\mu_d}			
\safemath{\coha}{\mu_a}			
\safemath{\cohb}{\mu_b}			
\safemath{\mubs}{\nu}	
\safemath{\cohm}{\mu_m} 
\safemath{\dictset}{\setD}	
\safemath{\dictsetp}{\dictset(\coh,\coha,\cohb)}	
\safemath{\dictsetgen}{\dictset_\text{gen}}
\safemath{\dictsetgenp}{\dictsetgen(\coh)}
\safemath{\dictsetonb}{\dictset_\text{onb}}
\safemath{\dictsetonbp}{\dictsetonb(\coh)}

\safemath{\leftside}{U}
\safemath{\rightsideA}{R_a}
\safemath{\rightsideB}{R_b}

\safemath{\indexS}{\setI_S} 

\safemath{\na}{n_a}			
\safemath{\nb}{n_b}			
\safemath{\coeffa}{p_i}	
\safemath{\coeffb}{q_j}	
\safemath{\seta}{\setP}		
\safemath{\setb}{\setQ}     
\safemath{\setw}{\setW}	
\safemath{\setz}{\setZ}	
\safemath{\cola}{\veca}		
\safemath{\colb}{\vecb}		
\safemath{\cold}{\vecd}		
\safemath{\inputvec}{\vecx} 	
\safemath{\error}{\vece}	
\safemath{\noiseout}{\vecz} 	
\safemath{\inputvecel}{x}
\safemath{\inputveca}{\vecx_a}
\safemath{\inputvecb}{\vecx_b}
\safemath{\outputvec}{\vecy}	
\safemath{\lambdamin}{\lambda_{\mathrm{min}}}

\safemath{\elltwo}{\ell_2}
\safemath{\ellone}{\ell_1}
\safemath{\ellzero}{\ell_0}
\safemath{\ellinf}{\ell_\infty}
\safemath{\ellinftilde}{\ell_{\widetilde\infty}}
\safemath{\licard}{Z(\coh,\coha,\cohb)}
\safemath{\xsol}{\hat{x}}
\safemath{\xbord}{x_b}		
\safemath{\xstat}{x_s}		
\safemath{\xstatLone}{\tilde{x}_s}
\safemath{\order}{\mathcal{O}} 
\safemath{\scales}{\Theta} 
\safemath{\ones}{\mathbf{1}} 
\safemath{\zeroes}{\mathbf{0}} 
\safemath{\thlone}{\kappa(\coh,\cohb)} 
\safemath{\constoneA}{\delta} 
\safemath{\constoneB}{\epsilon} 
\safemath{\nlarge}{L}				   
\safemath{\sumlarge}{S_\nlarge}
\safemath{\maxlarger}{P_\nlarge}	   
\safemath{\Pzero}{\textrm{P0}}	
\safemath{\Pone}{\textrm{P1}}
\safemath{\vecfir}{\vecw}			 
\safemath{\vecsec}{\vecz}
\safemath{\elvecfir}{w}              
\safemath{\elvecsec}{z}				 
\safemath{\nlargefir}{n}
\safemath{\normout}{\gamma}
\safemath{\auxfun}{h}
\safemath{\supp}{\textrm{supp}}

\safemath{\indexa}{\ell}
\safemath{\indexb}{r}
\safemath{\indexc}{i}
\safemath{\indexd}{j}

\safemath{\project}{P}


\markboth{To appear in IEEE Access}{C.\ Studer \emph{et al.}}

\renewcommand{\bml}{\ensuremath{\boldsymbol \ell}}

\newtheorem{assumption}{Assumption}
\newtheorem{example}{Example}
\newcommand{\revised}[1]{#1}

\newcommand{\eg}[1]{\textcolor{green}{\bf[eg: #1]}}

\begin{document}
\title{
%
\revised{Channel Charting: Locating Users within the Radio Environment using Channel State Information}
\author{Christoph Studer, Sa\"id Medjkouh, Emre G\"{o}n\"{u}lta\c{s}, Tom Goldstein, and Olav Tirkkonen\thanks{C.~Studer, S.~Medjkouh, and E.~G\"{o}n\"{u}lta\c{s} are with the School~of Electrical and Computer Engineering, Cornell University, Ithaca, NY; e-mail: studer@cornell.edu, sm2685@cornell.edu, eg566@cornell.edu; website: \url{http://vip.ece.cornell.edu}}\thanks{T. Goldstein is with the Department of Computer Science, University of Maryland, College Park, MD; e-mail: tomg@cs.umd.edu}\thanks{O. Tirkkonen was a visiting professor at the School~of Electrical and Computer Engineering, Cornell University, Ithaca, NY, and is now at the School of Electrical Engineering, Aalto University, Finland; e-mail: olav.tirkkonen@aalto.fi}
\thanks{The work of CS, SM, and EG was supported in part by Xilinx Inc., and by the US NSF under grants ECCS-1408006, CCF-1535897,  CAREER CCF-1652065, and CNS-1717559. The work of EG research was supported in part by a fellowship from the Ministry of National Education of the Republic of Turkey. T.\ Goldstein was supported by DARPA Lifelong Learning FA8650-18-2-7833, DARPA YFA D18AP00055, ONR N00014-17-1-2078, NSF CCF-1535902, and the Sloan Foundation. The work of OT was funded in part by the Kaute foundation, Nokia Foundation, and the Academy of Finland (grant 299916).}
\thanks{A short version of this paper will appear at the 2018 IEEE Global Communications Conference~\cite{MGGTS18}.}
\thanks{The authors would like to thank O.~Casta\~neda, P.~Huang, and J.~Deng for helpful comments on this manuscript.}
}}

\maketitle

\begin{abstract}
We propose \emph{channel charting (CC)}, a novel framework in which a
 multi-antenna network element learns a chart of the radio
geometry in its surrounding area. The channel chart captures the local spatial
geometry of the area so that points that are close in space will also be close
in the  channel chart and vice versa. CC works in a fully unsupervised manner, i.e., \revised{learning
 is only based on channel state information (CSI) that is passively collected at a
single point in space}, but from multiple transmit locations in the area over time. 
\revised{The method then extracts channel features that characterize large-scale fading properties of the wireless channel.}
Finally,  the channel charts are generated with tools from dimensionality reduction, manifold learning, and deep neural
networks.
The network element performing CC may
be, for example, a multi-antenna base-station in a cellular system and the charted area in 
the served cell. 
\revised{Logical relationships related to the position and movement of a
transmitter, e.g., a user equipment (UE), in the cell can then be directly deduced from comparing measured
radio channel characteristics to the channel chart.}
\revised{The unsupervised nature of CC enables a range of new applications in  UE localization, network
planning, user scheduling, multipoint connectivity, hand-over, cell
search, user grouping, and other cognitive tasks that rely on CSI and UE movement
relative to the base-station, without
the need of information from global navigation satellite systems.}
\end{abstract}


\begin{IEEEkeywords}
Autoencoders,  deep learning, dimensionality reduction, localization, machine learning, manifold learning. massive multiple-input multiple-output (MIMO), Sammon's mapping.
\end{IEEEkeywords}



\section{Introduction}

\IEEEPARstart{F}{uture}  wireless communication systems must sustain a massive increase in traffic
volumes, number of terminals, and reliability/latency requirements~\cite{andrews2014will,Osseiran2016}.  
In order to cope with these challenges, researchers have proposed a range of new technologies that improve 
spectral efficiency through massive multiple-input multiple-output (mMIMO)~\cite{Marzetta2010,Rusek2012,larsson2014massive},
 increase bandwidth by harnessing millimeter-wave (mmWave) bands
for mobile communication~\cite{Rappaport2013}, and rely on 
an extreme densification of network elements~\cite{Ge2016_UDN}.

While the advantages of these emerging technologies are glaring, they entail severe practical challenges. 
Mobility, in particular, poses problems for dense small-cell
networks~\cite{Prasad2013d}, as well as for mMIMO and mmWave networks, 
which provide extremely fine-grained angular separation.
\revised{In mmWave networks, coverage is often patchy
and hand-over regions between cells are sharp~\cite{Zhao2013,Akdeniz2014}. 
Hence, smooth cell hand-over, multipoint operation, and/or cell search requires multipoint channel-state information (CSI)~\cite{Ahmed2016}.}
However, potential solutions to some of these issues, such as integrated multiband operation~\cite{Chandra2015} or mobile
relaying~\cite{Deng2017}, will require significant amounts of multi-point
CSI.

\begin{framed}
\revised{To effectively manage and optimize these technologies, future wireless systems must lean heavily
on the availability of large amounts of high-dimensional CSI acquired at a multi-antenna base-station (BS) over large bandwidths and at fast rates, and from a large number of user equipments (UEs).}
To effectively use the collected CSI, the network has to {\it learn} the radio geometry in which the UEs are moving. What needs to be learned is a \emph{chart} of the network radio geometry,
which represents UE location and velocity information {\it related to CSI}. \revised{In order to
automate functions, dynamically adapt to changes in the environment, and avoid human intervention for training, learning the radio geometry should  be  \emph{unsupervised}.}
\end{framed}

It is remarkable that this problem has not been approached
in the literature. Significant effort has been spent on {\it wireless
  localization or positioning}~\cite{Gustafsson2005,Kumar2014,Garcia2016}. In
addition, use-case specific {\it fingerprinting} methods have been
developed in, e.g.,~\cite{Ibrahim2012,Prasad2013d,Chapre2015,Wang2015,Gao2015,Wang2016,Savic2015},
with recent developments applying state-of-the art deep learning methods for
mMIMO channel fingerprinting~\cite{Vieira2017}. However, fingerprinting methods
are {\it fully supervised}, do not lend themselves to
automation while acquiring labeled data, and do not scale to complex channel environments that change dynamically. 
Note that supervision achieved by acquiring precise
location information from application layer localization services, such as global navigation satellite systems (GNSS), 
does \emph{not} apply when optimizing cellular networks---the
application layer is not even present in the whole LTE radio access
network~\cite{Dahlman2011}. 


\subsection{Contributions}
We introduce \emph{channel charting} (CC), a novel framework that maps 
slowly varying CSI components of transmitters (e.g., UEs) into a low-dimensional channel chart that preserves the local geometry of the transmitters' true spatial locations. 
\revised{We show that by collecting and processing large amounts of high-dimensional CSI}, one can accurately learn such channel charts in an unsupervised fashion, \revised{i.e., without access to location information.}
Our key contributions are summarized as follows:
\begin{itemize}
\item \revised{We propose a novel framework, which we call \emph{channel charting},
  that maps CSI acquired from UEs in a cell
  into a low-dimensional map that captures the local geometry of the
  true UEs' location in space.} Channel charting is unsupervised, i.e., does
  \emph{not} require any information from the true UEs' location,
  e.g., obtained from global navigation satellite systems (GNSSs).
\item \revised{We describe how suitable CSI features can be extracted from
  channel measurements. More specifically, we identify that taking the
  absolute value of the raw second moment (R2M) in the angular (or beam-space)
  domain delivers features that exhibit high trustworthiness and
  continuity.}
\item \revised{We show analytically (in Example 2), that the R2M captures
  large-scale fading components of wireless channels which is key to
  enabling the concept of channel charting.}
\item \revised{We develop three new channel charting algorithms by extending
  existing manifold learning and dimensionality reduction
  techniques and adapting them to the tasks of channel charting. We
  emphasize that we are not simply providing a survey of existing
  methods but rather adapting and modifying them for our framework.
  Furthermore, the Sammon's mapping plus (SM+) method which includes
  side information on the UEs movement and the corresponding numerical
  algorithm developed in \fref{sec:sammonplus} has not been proposed in the
  literature---not even for other, unrelated machine learning applications that use
  manifold learning or dimensionality reduction.}
\item \revised{We provide a range of numerical simulation results for three
  distinct channel models to demonstrate the efficacy of our channel
  charting framework. More specifically, our results show that channel
  charting is feasible in line-of-sight (LoS) and non-LoS scenarios,
  and performs surprisingly well at relatively low
  signal-to-noise-ratio~(SNR).}
\end{itemize}
\revised{We envision a range of possible future applications for channel
charting in cognitive tasks that rely on CSI and UE movement relative
to the BSs, including semantic localization~\cite{pradhan2000semantic}, cell search,
hand-over and
multi-connectivity~\cite{Andrews2013,Kishiyama2013,Prasad2013d,Chandra2015,Ahmed2016,Tesema2016,Deng2017},
link adaptation, user clustering, beam finding, etc. However, being the first paper
on the subject, this study is intended to (i) introduce and validate the fundamental concept of CC
 and (ii) compare a wide range of possible algorithms to maximize the quality of the learned channel charts.}

\subsection{Relevant Prior Art}
%

To the best of the authors' knowledge, direct charting of the radio geometry of the UEs
has not been addressed in the open literature. All existing approaches are related to localizing UEs
in the true spatial geometry. Alternatively, look-up tables based on
supervised fingerprinting have been used to identify use-case specific states of
the channel.
Conventional methods to localize UEs in spatial geometry are mainly based on
triangulation or trilateration methods which use fixed
  geometrical models to map a low-level descriptor of the
channel, such as the time-of-flight (ToF), angle-of-arrival (AoA), and/or received signal strength (RSS) to a location in spatial
geometry~\cite{Gustafsson2005,Kumar2014}. Localization in a mMIMO
system based on ToF and AoA measurements has been addressed  recently 
in~\cite{Garcia2016}. However, to provide a chart in radio geometry, such
methods would have to be complemented with a map from spatial to radio
geometry.

A digital map is essentially a spatial geometry map that associates
radio geometry features (e.g., RSS) with a given spatial location. Such maps have
been created either by prediction models (e.g., by network planning
tools) or by carrying out dedicated measurement campaigns~\cite{Gustafsson2005},
i.e., either based on analytical models, or in a fully supervised manner. 

Similarly, for channel {\it fingerprinting}~\cite{Ibrahim2012,Prasad2013d,Chapre2015,Wang2015,Gao2015,Wang2016},
a coarse grained channel map is generated in a measurement
campaign~\cite{Ibrahim2012,Chapre2015,Wang2015,Gao2015,Wang2016} or
by directly classifying RSS measurement features by a network event,
such as the vicinity of a small cell~\cite{Prasad2013d}.
More refined fingerprinting has been proposed in~\cite{Savic2015}, where
mMIMO channel states are fingerprinted for positioning purposes.
In~\cite{Vieira2017},  state-of-the-art deep learning methods are used
for this purpose. Existing fingerprinting methods are, however, {\it  fully supervised}. \revised{This implies that changes in the physical channel (e.g., new buildings) would require a completely new measurement campaign. Furthermore, the method in~\cite{Vieira2017} requires training of the channel at wavelength scales in space.
In contrast, CC is unsupervised, which avoids costly measurement campaigns, and
requires orders-of-magnitude less dense spatial sampling.}

In channel charting, we are primarily interested in preserving the
local neighborhood structure of the spatial geometry when charting
the radio geometry. \revised{For this, we shall use and extend tools from manifold
learning~\cite{Kruskal1964,sammon1969nonlinear} and dimensionality reduction~\cite{van2009dimensionality}. Multidimensional scaling (MDS)~\cite{Kruskal1964} and Sammon's mapping~\cite{sammon1969nonlinear} attempt to embed a high-dimensional manifold into a low-dimensional space.}
We will show how CSI can be transformed into suitable channel features that enable an unsupervised extraction of accurate channel charts using such manifold learning and dimensionality reduction tools.

\subsection{Paper Outline}
The rest of the paper is organized as follows. 
\fref{sec:channelchartingprinciples} introduces the principles of CC.
\fref{sec:metrics} details the used quality measures. 
\fref{sec:features} discusses suitable channel features that enable accurate CC.
\fref{sec:channelchartingalgorithms} proposes three different CC algorithms.  
\fref{sec:results} shows CC results for a range of channel scenarios. 
We conclude in \fref{sec:conclusions}.

\subsection{Notation}
Lowercase and uppercase boldface letters stand for column vectors and matrices, respectively. For the matrix $\bA$, the Hermitian is $\bA^H$ and the $k$th row and $\ell$th column entry is~$A_{k,\ell}$ or $[\bA]_{k,\ell}$. For the vector $\bma$, the $k$th entry is~$a_k$.
The Euclidean norm of~$\bma$ and the Frobenius norm of~$\bA$ are denoted by~$\|\bma\|_2$ and~$\|\bA\|_F$, respectively. The $M\times N$ all-zeros and all-ones matrix is $\bZero_{M\times N}$ and $\bOne_{M\times N}$, respectively, and the $M\times M$ identity is $\bI_M$.  
The collection of $K$ vectors $\bma_{k}$, $k=1,\ldots,K$, is denoted by~$\{\bma_k\}_{k=1}^K$. 
The real and imaginary parts of the vector $\bma$ are denoted by $\Re(\bma)$ and $\Im(\bma)$, respectively.


\begin{figure*}[tp]
	\centering
	\subfigure[3D view on the considered system scenario.]{\includegraphics[width=0.31\textwidth]{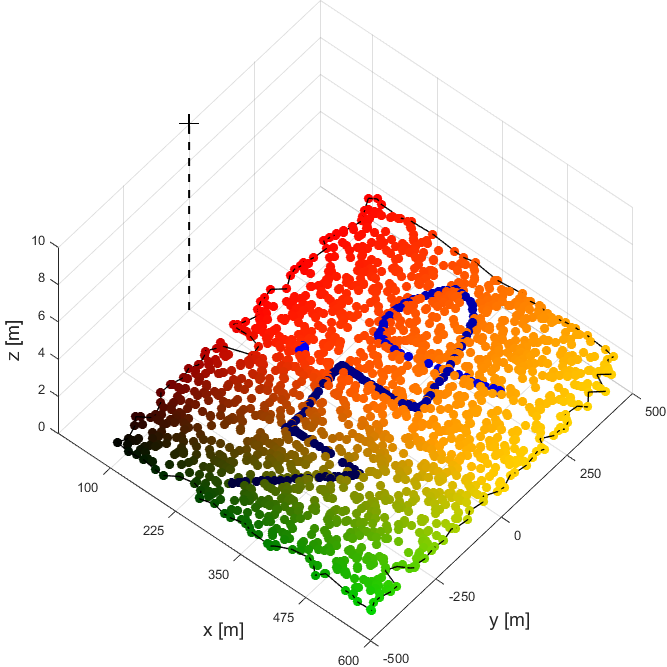}\label{fig:CCexamplea}}
	\hspace{0.2cm}
	\subfigure[True distances vs. feature dissimilarities.]{\includegraphics[width=0.29\textwidth]{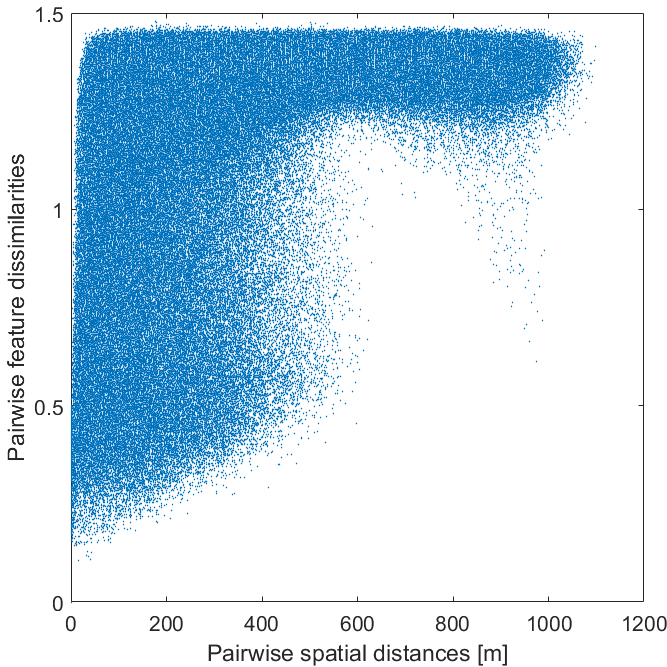}\label{fig:CCdistvsfeat}}
	\hspace{0.2cm}
	\subfigure[Learned channel chart.]{\includegraphics[width=0.2975\textwidth]{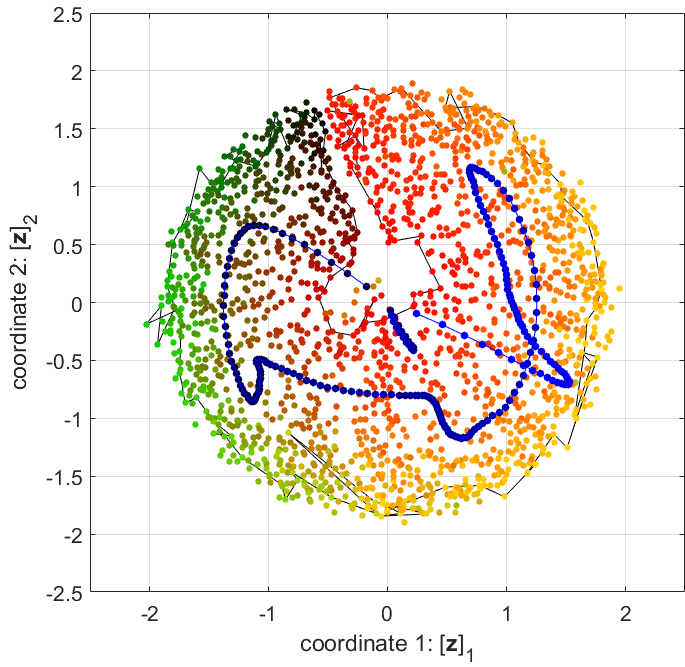}\label{fig:CCexampleb}}
	\caption{Illustration of channel charting: (a) Massive MIMO BS
        with a 32-antenna uniform linear array at location
        $(x,y,z)=(0,0,10)$ meters measures CSI from $2048$ distinct
        gradient colored points in space. (b) \revised{Scatter plot with points
        representing  pairs of
        transmitters. For each pair, a point with pairwise spatial
distance vs. pairwise feature dissimilarity, constructed from CSI. The
dissimilarity is lower-bounded by the pairwise spatial
distance.} (c) Channel chart obtained from channel features in an
unsupervised manner. \revised{The blue points that form the ``VIP'' curve illustrate
the properties of the channel chart: local geometry is well-preserved
(shown by the color gradients) and one can identify ``VIP'' in the
channel chart.}}
	\label{fig:CCexample}
\end{figure*}

\section{The Principles of Channel Charting}
\label{sec:channelchartingprinciples}
We now introduce the core ideas of CC. We first discuss the main objective and then detail the operating principles as well as the underlying assumptions.

\subsection{Main Objective}
The main objective of CC is  to \emph{learn} a low-dimensional embedding, the so-called \emph{channel chart}, from a large amount of high-dimensional CSI of transmitters (e.g., mobile or fixed UEs) at different spatial locations over time. This channel chart \emph{locally} preserves the original spatial geometry, i.e., transmitters that are nearby in real space will be placed nearby in the low-dimensional channel chart and vice versa.  
\revised{CC will 
learn whether two transmitters are close to each other by
forming a {\it dissimilarity measure}~\cite{goshtasby2012similarity} between
CSI features of these transmitters. Based on this, CC generates
 the low-dimensional channel chart in an \emph{unsupervised} fashion from CSI only and without assumptions on the physical channel, i.e., without the aid of information from  GNSS, such as the global positioning system (GPS), triangulation/trilateration techniques, or fingerprinting-based localization methods~\cite{Gustafsson2005,Kumar2014}. This important property enables CC to extract geometry information about the transmitters' in a completely passive manner, opening up a broad range of novel applications.}

\begin{example}
	\revised{\fref{fig:CCexample} demonstrates the key concepts of CC: (a) shows
        the considered scenario. A massive MIMO BS with a uniform
        linear array (ULA) of $B=32$ antennas receives data from
        $N=2048$ UE locations. We simulate a narrowband, line-of-sight (LoS) channel at a signal-to-noise ratio (SNR)
        of 0\,dB (see \fref{sec:results} for more details). (b)
        illustrates the relation between carefully-designed channel
        features (obtained solely from CSI) and UE locations. 
The scatter plot consists of points representing pairs of
transmitters. For each pair, there is a point, with x-value being the
pairwise spatial distance and y-value the pairwise feature
dissimilarity. The used CSI-features and dissimilarities are discussed
in Section~\ref{sec:features}. 
The channel features are designed to ensure that the pairwise feature
        dissimilarity is approximately lower-bounded by the pairwise
        spatial distance (when divided by a suitable reference
        distance). 
        Thus, UEs that are far apart in space
        will have dissimilar channel features. (c) shows the resulting chart of
        one of our unsupervised CC algorithms. We observe that the
        local geometric features of the original spatial geometry are
        well-preserved. In fact, we recover the ``VIP'' curve (which are UEs positioned in space to form a contiguous curve) in the 
         channel chart.} 
	%
\end{example}

\subsection{Operating Principles of Channel Charting}
\label{sec:operatingprinciples}
\fref{fig:cc_overview} provides a high-level overview of the CC framework.  
Consider, for the sake of simplicity, a single-antenna transmitter
(Tx) that is either static or moves in real space. We denote its
spatial locations at discrete time instants $n=1,\ldots,N$ by the set
$\{\bmx_n\}_{n=1}^N$ with $\bmx_n\in\reals^D$, where $D$ is the
dimensionality of the \emph{spatial geometry} (for example the three
dimensions representing the UE's x, y, and z coordinates in real
space).
At each time instant $n$, the Tx sends data  $\bms_n$ (e.g., pilots or information symbols), which is received at a multi-antenna receiver (Rx) with $B$ antennas; this could be a mMIMO BS~\cite{Marzetta2010,Rusek2012,larsson2014massive}.
The received data is modeled as $\bmy_n = H(\bms_n) + \bmn_n$, where the function $H(\cdot)$ represents the wireless channel between the transmitter and receiver, and the vector~$\bmn_n$ models noise.

\subsubsection{Channel Function}
In what follows, we are not interested in the transmitted data   but rather in the associated CSI.
Concretely, the Rx uses the received data $\bmy_n$ to extract CSI denoted by the vector \mbox{$\bmh_n\in\complexset^M$}, where $M$ denotes the dimensionality of the acquired CSI from all antennas, frequencies, and/or delays. The generated CSI typically describes angle-of-arrival, power delay profile, Doppler shift, RSS, signal phase, or simply first and second moments (e.g., mean and covariance) of the received data; typically, we have $M\gg D$.
We denote the mapping from spatial location~$\bmx_n$ to CSI $\bmh_n$ with the following channel function:
\begin{align*}
	\setH & :   \reals^D \to \complexset^M,
\end{align*}
where  $\complexset^M$ refers to the \emph{radio geometry}. 
Clearly, the CSI represented by $\bmh_n$ mainly depends on the Tx's spatial location~$\bmx_n$, but also on moving objects within the cell, as well as on noise and interference. Throughout this paper, we make the following key assumption:
\begin{assumption} \label{ass:assumption1}
We assume that the statistical properties of the multi-antenna channel vary relatively slowly across space, on a
length-scale related to the macroscopic distances between scatterers
in the channel, not on the small fading length-scale of
wavelengths. We furthermore assume the channel function~$\setH$ to be
static\footnote{An extension to time-varying channels is part of
	ongoing research.}.
\end{assumption}

\revised{Large-scale effects of channels are considered to be created by
reflection, diffraction, and scattering of the physical environment, whereas
small-scale effects are caused by multipath propagation and related
destructive/constructive addition of signal
components~\cite{Rappaport1996}. To motivate Assumption~\ref{ass:assumption1}, we
consider the following example, which demonstrates that the statistical
moments of interest for this paper (see \fref{sec:features}) 
indeed capture large-scale effects of the wireless channel.}

\begin{example}
	
The channel between a single Tx and a $B$-antenna Rx is modeled with a set of rays and we assume~$N_s$
scatterers. 
We consider a NLoS scenario for which all rays are in the far field,
so that they can be modeled by plane waves. 
The distance
from Tx~$t$ to scatterer $s$ is $d_{ts}$, and the distance from
scatterer~$s$ to Rx-antenna $r$ is~$d_{sr}$. The attenuation between
two points $x$ and $y$ is modeled by a function of the distance,
$a_{xy} = a(d_{xy})$, which absorbs the relevant scatterer cross
sections, antenna gains, etc. The distance dependence is typically
a power law, and changes in~$a(d)$ happen on length scales much larger
than the wavelength~$\lambda$; for conventional ray-tracing,
$a(d)\sim d^{-2}$, corresponding to free-space path
loss~\cite{Goldsmith2005}. In addition, each scatterer $s$ is modeled by a
phase shift $\phi_{s}$, related to the dielectric properties of the
scatterer~\cite{Lee1973,Shiu2000}; these are assigned i.i.d.\ random
variables for each scatterer. The channel between $t$ and $r$ can thus
be modeled as
\begin{align*}
h_{t,r} = \sum_{s=1}^{N_s} a_{ts}\, a_{sr} \exp\!\left( j\! \left(\frac{2\pi}{\lambda} \left(d_{ts} + d_{sr}\right) + \phi_s\right)\!\right)\!.
\end{align*}
When the number of scatterers $N_s\to\infty$, the channel becomes
Rayleigh fading. This is a characteristic of the distribution of the
absolute value of the channel coefficients, when considered a random
variable, where randomness is according to the location of the
transmitter within a small scale neighborhood of a few wavelengths.
Long term channel characteristics are averaged over this neighborhood.
For a mean of a MIMO channel, as a large-scale channel feature that
describes the statistics of small scale fading, the pertinent
characteristics are thus the mean absolute value of the channel at
each antenna~$r$, and the mean relative phase difference between
antennas. For the means, following~\cite{Goldsmith2005}, and averaging
over a small scale neighborhood of a few wavelengths, one finds that
the wavelength ($\lambda$) dependence vanishes. For the angular
difference, a similar argument leads to the observation that they are
large-scale effects of the channel.

\revised{Concretely, evaluating the raw 2$^\text{nd}$ moment of the channel from
Tx $t$ to Rx antennas $r,r'$ yields
\begin{align*}
\left[\Ex{\phi}{\mathbf{h}_t\mathbf{h}_{t}^{\rm H} }\right]_{r,r'}   
=&  \sum_{s=1}^{N_s} \sum_{s'=1}^{N_s} \mathbb{E}_{\phi}\bigg[
a_{ts}\, a_{ts'} e^{j \frac{2 \pi}{\lambda}\left(d_{ts} - d_{ts'}\right)}
  \\
&  \times a_{sr} \, a_{s'r'}  
       e^{j \frac{2 \pi}{\lambda}  \left(d_{sr} -d_{s'r'}\right) + j \left(\phi_s-\phi_s'\right)}\bigg]  \\
=&\sum_{s=1}^{N_s} {a_{ts}^2 \, a_{sr} \, a_{sr'} 
e^{j \frac{2\pi}{\lambda} \left( d_{sr} -d_{sr'} \right)},
 }       \end{align*}
where for clarity, we have considered the expectation over the random
phases $\phi$ only, assuming that the distances are fixed.}
In the limit, this expression changes only slowly with the
distances~$d_{ts}$ through the attenuation function $a_{ts}$.
Now consider the (raw) covariance matrix estimated for two transmitters
$t$ and $t'$. If $a_{ts} \approx a_{t's}$ for all scatterers $s$, then the
covariance matrices $\mathbf{R}_t$ and $\mathbf{R}_{t'}$ are
approximately the same. The covariance matrices differ only at length
scales where the change in the distances between the transmitter and
the scatterers is significant---changes in the channel covariance is a
large-scale fading effect, driven by the quenched random process that
creates the scatterers in the environment.
	
\end{example}

\begin{figure*}[tp]
	\centering
	\includegraphics[width=0.95\textwidth]{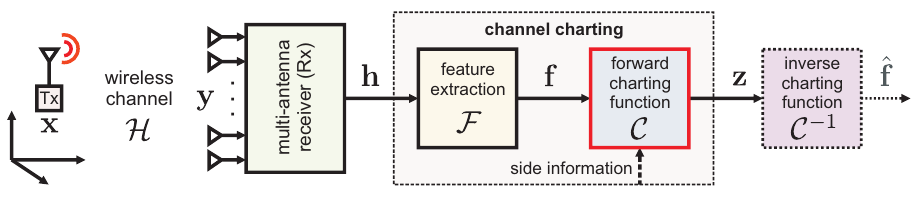}
	\caption{Channel charting (CC) overview. Mobile transmitters (Tx) at spatial location $\bmx$ are sending information to a multi-antenna receiver (Rx) over the wireless channel~$\setH$. Channel charting first uses channel-state information (CSI) $\bmh$ to extract channel features $\bmf$, which are then processed by a channel charting algorithm to learn a forward charting function $\setC$ that generates an embedding in spatial geometry $\bmz$ that preserves local geometry in an unsupervised manner.}
	\label{fig:cc_overview}
\end{figure*}

\subsubsection{Channel Charting}
By relying on Assumption~\ref{ass:assumption1}, we are ready to detail the CC procedure. 
\revised{CC starts by distilling the CSI~$\bmh_n$  into suitable \emph{channel features}~$\bmf_n\in\reals^{M'}$ that capture large-scale properties of the wireless channel; here, $M'$ denotes the feature dimension and, typically, we have $M' \gg D$. 
See \fref{sec:features} for the details on how to design channel features.}
We denote the feature extraction stage by the  function
\begin{align*}
	\setF &:  \complexset^M \to \complexset^{M'}.
\end{align*}
\revised{Feature extraction mainly serves three purposes: (i) extracting
large-scale fading properties from CSI, (ii) distilling CSI into
useful information for the subsequent CC pipeline, and (iii) reducing
the vast amount of 
channel data.}
CC then proceeds by using the set of $N$ collected features~$\{\bmf_n\}_{n=1}^N$ to learn the so-called \emph{forward charting function} (with possible side information; see \fref{sec:sammons}) in an unsupervised manner. 
We denote the forward charting function to be learned by 
\begin{align*}
	\setC &:  \complexset^{M'} \to \reals^{D'},
\end{align*}
which maps each channel feature $\bmf_n$ to a point $\bmz_n\in\reals^{D'}$ in the low-dimensional \emph{channel chart}; typically, we have $D'\approx D$. 
%
The objective for learning $\setC$ is as follows:
\begin{framed}
\revised{
	\noindent The forward charting function $\setC$ should preserve local geometry between neighboring data points, i.e., it should satisfy the following condition:
	\begin{align*}
		d_z(\bmz, \bmz') \approx  d_x(\bmx, \bmx').
	\end{align*}
	Here, $\bmx,\bmx'\in\reals^D$ are two points in real space within a certain neighborhood, and $\bmz,\bmz'\in\reals^{D'}\!$ are the corresponding vectors in the learned channel chart.  The functions $d_x(\bmx,\bmx')$ and $d_z(\bmz,\bmz')$ are suitably defined measures of distance (or, more generally, dissimilarity) and the neighborhood size depends on the physical channel.
	} 
\end{framed}
\revised{
The goal of CC is to generate a channel chart $\{\bmz_n\}_{n=1}^{N}$
satisfying the 
distance property above for $\bmx$ and $\bmx'$ in a neighborhood as large as possible.  We would like to learn this channel chart
solely from the set of~$N$ channel features~$\{\bmf_n\}_{n=1}^{N}$ in an unsupervised manner, i.e., without using the true spatial locations $\{\bmx_n\}_{n=1}^N$ of the UEs.}

\begin{rem}
	The assumption that the channel features~$\{\bmf_n\}_{n=1}^{N}$ were obtained from a single transmitter (e.g., UE) is not important. 
	In fact, we are merely interested in collecting $N$ channel features from as many locations in spatial geometry as possible. 
	\revised{The fact that certain subsets of channel features stem from a single UE can be used as potential side information, which improves the geometric relationships in the learned channel chart;  see \fref{sec:sammons} for a concrete example.}
\end{rem}

\begin{figure}[tp]
	\centering
	\includegraphics[width=0.9\columnwidth]{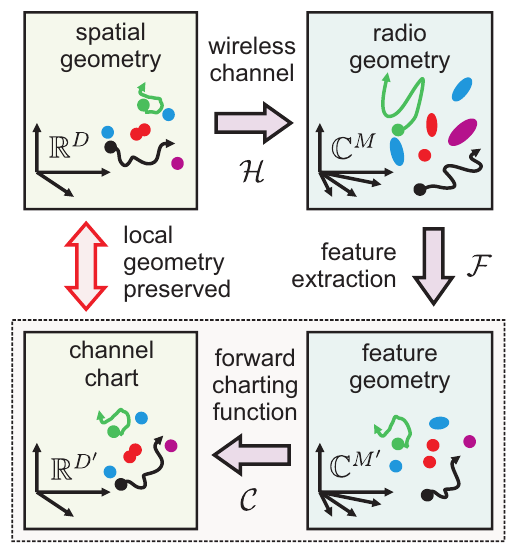}
	\caption{Summary of the geometries involved in CC. Transmitters (Tx) are located in spatial geometry~$\reals^D$ and a receiver (Rx) extracts channel-state information (CSI) in radio geometry $\complexset^M$. Feature extraction distills useful informatin into feature geometry $\complexset^{M'}$, which is then used to learn the forward charting function that maps the features into a low-dimensional channel map in~$\reals^{D'}$ that preserves the local geometry of the original spatial locations $\reals^D$.}
	\label{fig:geometries}
\end{figure}

\subsection{Involved Geometries and Usage of CC}

\fref{fig:geometries} provides a summary of the geometries involved in CC.
The transmitters are located in spatial geometry denoted by~$\reals^D$ (e.g., representing their coordinates). The physical wireless channel~$\setH$ maps data (pilots and information) into CSI in radio geometry space denoted by $\complexset^M$. 
This non-linear mapping into radio geometry obfuscates the spatial relationships between transmitters.  The purpose of feature extraction is to find a representation from which spatial geometry is easily recovered. 
CC then learns---in an unsupervised manner---the forward charting function $\setC$ that maps the channel features into low-dimensional points in the channel chart $\reals^{D'}$ such that neighboring transmitters (in real-world coordinates) will be neighboring points in the channel chart, i.e., CC preserves the local geometry. 
Note that in some application scenarios one may be interested in the inverse charting function $\setC^{-1}$ that maps channel charts information back into feature geometry.\footnote{For example, with $\setC^{-1}$, the amount of multipoint CSI required for multipoint transmission~\cite{Lozano2013} and interference alignment~\cite{Gomadam2011} can be reduced.}

\begin{example}
\revised{An example of how CC could be used in practice is as follows. A mobile UE is served by a cellular network,
  and is connected to a particular BS. Conventionally, cell hand-over is
  executed based on RSS measurements performed at the UE. The UE
  continually monitors synchronization signals transmitted by all BSs
  in the network, and sends the measurement results to the BS.
  Handover is then reactively performed, according to these
  measurements. In a location-based mobility management
  scenario~\cite{koivisto2017high}, to decrease signaling and UE
  measurements, the network proactively performs hand-over based on
  spatial localization of the UE. The user is first localized by
  fusing ToF and AoA measurements of multiple BSs. Based on the UE
  location, environment specific information is used to calculate the
  best cell. In a CC-based approach to cell hand-over, the BS would have a chart of the
  radio features in the cell served by it, labeled by locations where
  handover events have occurred. From uplink pilots transmitted by the
  UE, it may localize the UE in the {\it radio geometry}, and execute
  handover when the CC indicates a point where handovers happen. Note
  that in CC, the decision to execute handover is based on
  measurements at a single BS; network wide fusion is not required. In
  contrast, the location-based method discussed in the
  literature~\cite{koivisto2017high} applies both network-wide fusion for
  spatial localization, and side information related to propagation
  condition between a BS, and a UE at a given spatial location. Furthermore, by tracking and predicting a UE's movement in the channel chart, one can even anticipate cell hand-over events before they happen.}

\end{example}

\subsection{Do We Have Sufficient CSI for Channel Charting?}
\revised{To extract accurate channel charts in an unsupervised manner, we require high-dimensional CSI that is from as many distinct transmit locations as possible and acquired at multiple BS antennas over large bandwidths  and at fast rates.
Fortunately, virtually all modern wireless systems already generate high-dimensional CSI data at extremely fast rates.}

\begin{example}
A BS for 3GPP long-term evolution (LTE)~\cite{3GPPLTEA} measures up to $100$ MIMO channels each
millisecond, leading to more than  $10^{10}$ complex-valued numbers per day
for a $2\times 4$ MIMO channel. A similar amount of data is collected by active user equipments
(UEs), which signal up to $226$\,bits of CSI to the BS every $2$\,ms~\cite{Dahlman2011}. Currently, most of
that data is discarded immediately after use (e.g., for data detection or precoding), with a limited amount
kept in order to track the average received signal strength (RSS) of the UEs.
\end{example}

For CC, the idea is to collect and process the acquired CSI to learn channel charts. The total dimensionality~$M$ of each CSI vector is determined by the number of receiver antennas~$B$  times the number of subcarriers (or delays)~$W$. As we will show in \fref{sec:features}, we intentionally ``lift'' the CSI vectors into a higher dimensional space, effectively squaring the total feature dimension. 
We collect channel features from $N$ distinct transmitter locations, which further amplifies the amount of data available for channel charting. 
Hence, the total number of channel features used for CC can easily be in the billions.

\begin{example} \label{ex:example1}
	Consider a wideband massive MIMO receiver with $B=32$ BS antennas and $W=128$ subcarriers, which results in $M=BW=2^{12}$ dimensional CSI vectors. 
If we lift each CSI vector into an $M'=M^2$ dimensional space, we have features with $M'=2^{24}$ dimensions. By collecting channel features from $N=2,\!048$ distinct spatial locations, we have a total dimension of $2^{35}$, which is a dataset containing more than $34$ billion complex-valued channel feature coefficients.
\end{example}

\revised{Note that these numbers are conservative. Fifth-generation (5G) wireless networks likely have  many more BS antennas and subcarriers, and receive data from a large number of UEs.
This torrent of channel features is a blessing and a curse at the same time. 
Clearly, the proposed CC methods will have sufficient data to learn from.
However, the vast amount of CSI poses severe challenges for storage and processing. Channel feature extraction must reduce the size of this data, and charting algorithms must scale appropriately.  We will discuss suitable features in \fref{sec:features} and computationally efficient CC algorithms in \fref{sec:channelchartingalgorithms}.}

\section{Quality Measures for Channel Features \\ and Channel Charts}
\label{sec:metrics}
To characterize the usefulness of channel features and the quality of the generated channel charts, we need a measure of how well the channel features or points in the channel chart preserve the spatial geometry of the true transmitter locations---\revised{suitable features would preserve the local geometry for a neighborhood as large as possible.
To assess the channel charting quality, we borrow two metrics typically used to measure the quality of dimensionality reduction methods, namely \emph{continuity}~(CT) and \emph{trustworthiness} (TW)~\cite{venna2001neighborhood,kaski2003trustworthiness,vathy2013graph}.}


We next explain both of these quality measures in the context of two abstract sets of data points with cardinality~$N$, i.e., $\{\bmu_n\}^{N}_{n=1}$ from an \emph{original space} and $\{\bmv_n\}^{N}_{n=1}$ from a \emph{representation} of the original space; the point $\bmv_n$ is said to represent $\bmu_n$.
In the CC context, the original space would be the spatial geometry and the representation space can either be the feature geometry or the channel chart (see \fref{fig:geometries}), depending on whether we want to measure the quality of the channel features or of the learned channel chart.

In what follows, we define the $K$-\emph{neighborhood} of a point~$\bmu$ as the set containing its $K$ nearest neighbors in terms of the chosen distance (or dissimilarity) function $d_u(\bmu,\bmu')$. The neighborhood of $\bmv$ is defined analogously using $d_v(\bmv,\bmv')$.

\subsection{Continuity (CT)}

Neighbors in the original space can be far away (dissimilar) in the representation space. In such situations, we say that the representation space does not preserve the \emph{continuity} of the original point set. 
\revised{To measure such situations, we first define the \emph{point-wise continuity} for $K$ neighbors of the data point~$\bmu_i$. Let $\setV_K(\bmu_i)$ be the $K$-neighborhood of point $\bmu_i$ in the original space (but not necessarily in the representation space). Also, let $\hat{r}(i,j)$ be the ranking of point $\bmv_j$ among the neighbors of point~$\bmv_i$, ranked according to their similarity to~$\bmv_i$. For example, $\hat{r}(i,j)=k$ indicates that point $\bmv_j$ is the~$k$th most similar point to $\bmv_i$.
 Then, the point-wise continuity of the representation~$\bmv_i$  of the point $\bmu_i$ is defined as 
\begin{align*} 
\textrm{CT}_i(K) = 1 - \frac{2}{K(2N-3K-1)} \sum\limits_{j \in \setV_K(\bmu_i)}(\hat{r}(i,j)-K).
\end{align*}
The (global) \emph{continuity} between a point set $\{\bmu_n\}_{n=1}^N$ and its representation $\{\bmv_n\}_{n=1}^N$ is simply the average over all the point-wise continuity values, i.e., 
$\textrm{CT}(K) = \frac{1}{N} \sum_{i=1}^N \textrm{CT}_i(K)$~\cite{venna2001neighborhood}.
%
Both the point-wise and global continuity measures range between zero and one. If continuity is low (e.g., $0.5$ or smaller), then points that are similar is the original space are dissimilar in the representation space.  When continuity is large (close to~$1$), the representation mapping is neighbor preserving.}

\subsection{Trustworthiness (TW)}
Continuity measures whether neighbors in the original space are preserved in the representation space.  However, it may be that the representation mapping introduces {\em new} neighbor relations that were absent in the original space.  Trustworthiness measures how well the feature mapping avoids introducing these kinds of false relationships.
Analogous to the point-wise continuity, we first define the point-wise \emph{trustworthiness} for a $K$-neighborhood of point~$\bmv_i$.
\revised{Let $\setU_K(\bmv_i)$ be the set of ``false neighbors'' that are in the $K$-neighborhood of $\bmv_i,$ but not of~$\bmu_i$ in the original space.
Also, let $r(i,j)$ be the ranking of point $\bmu_j$ in the neighborhood of point $\bmu_i$,  ranked according to their similarity to $\bmu_i$. The point-wise trustworthiness of the representation of point~$\bmu_i$ is then
\begin{align*} 
\textrm{TW}_i(K) = 1 - \frac{2}{K(2N-3K-1)} \sum\limits_{j \in \setU_K(\bmv_i)}(r(i,j)-K).
\end{align*}
The (global) \emph{trustworthiness}  between a point set $\{\bmu_n\}_{n=1}^N$ and its representation $\{\bmv_n\}_{n=1}^N$ is simply the average over all the point-wise trustworthiness values, i.e., 
%
$\textrm{TW}(K) = \frac{1}{N} \sum_{i=1}^N \textrm{TW}_i(K)$~\cite{venna2001neighborhood}.
%
Both the point-wise and global trustworthiness range between zero and one. Low trustworthiness values represent situations in which most data points that seem to be similar in representation space are actually dissimilar in the original space. If the trustworthiness lies close to one, then data points that are close in representation space are also similar (close) in original space.}

\begin{rem}
Since we are interested in preserving \emph{local} geometry, we set $K$ to $5$\% of the total number of points $N$, i.e., $K=0.05N$. Note that this is a common choice in the dimensionality-reduction literature~\cite{venna2001neighborhood}. 
\end{rem}

\subsection{Uses of CT and TW for Channel Charting}
We will use the CT and TW measures for two purposes.
First, we will use both measures to assess the quality of channel features $\{\bmf_n\}_{n=1}^N$. 
For this purpose, we measure CT and TW between the spatial geometry and the feature geometry (see \fref{fig:geometries}).  See \fref{sec:featureanalysis} for a detailed analysis of channel features that preserve the CT and TW and, hence, are suitable for CC. 
Second, we will use these measures to assess the quality of the learned channel charts $\{\bmz_n\}_{n=1}^N$.
For this purpose, we measure CT and TW between the spatial geometry and the channel chart.
See \fref{sec:results} for a comparison of the CC algorithms proposed in this paper.


\section{Channel Features}
\label{sec:features}
We now focus on the feature extraction stage. Concretely, we show that computing the raw 2$^\text{nd}$ moment of CSI, feature scaling, and transforming the result in the angular domain yields channel features that accurately represent large-scale fading properties
of wireless channels.

\newcommand*{\rttensortwo}[1]{\bar{\bar{#1}}}
\newcommand*{\rttensorthree}[1]{\bar{\bar{\bar{#1}}}}

\subsection{Features from CSI via Moments}
\label{sec:moments}
\revised{To limit the search for suitable channel features, we focus on Frobenius (or Euclidean) distance as dissimilarity measure on pairs of features, i.e., we use $d_f(\bF,\bF') = \|\bF-\bF'\|_F$, where (by abuse of notation) we allow the features to be matrices.
To generate suitable channel features, we focus on a second order statistical moment of the received CSI. Let $\bmh_t\in\complexset^M$ be a vector containing CSI acquired (e.g., during the training phase) at time instant $t$. 
We compute the  \emph{raw 2$^\text{nd}$ moment} (R2M) of dimension $M^2$ as follows: $\bar\bH = \Ex{}{\bmh\bmh^H}$.
Here, expectation is over noise, interference, and potential variations in CSI caused by small-scale motion during short time (i.e., well-below the coherence time of the channel).
It is important to note that computing the outer product leads to a representation of CSI that is agnostic to any global phase rotation that may stem from small-scale fading.
In practice, we compute $\bar\bH = \frac{1}{T}\sum_{t=1}^T \bmh_t\bmh^H_t$ for a small number (e.g., ten or less) of time instants $T$.
We can then use $\bar\bH$ to extract the necessary channel features in two steps: (i)  CSI scaling and (ii) feature transform. Both of these steps are detailed next.}

\begin{figure}[tp]
	\centering
	\includegraphics[width=0.995\columnwidth]{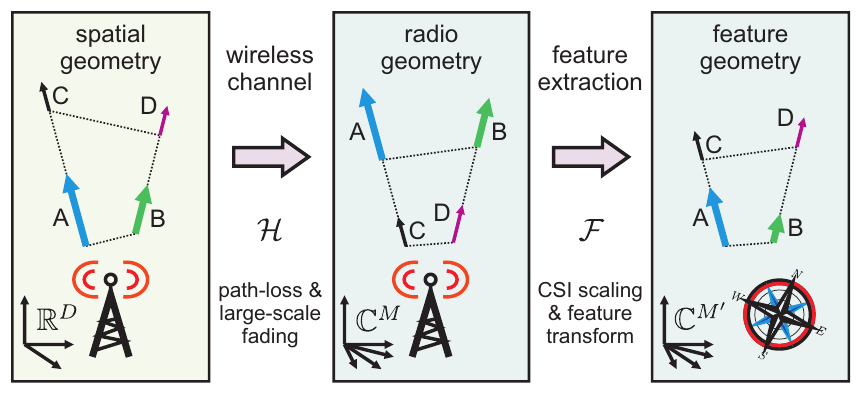}
	\caption{Illustration of the importance of CSI scaling during feature extraction.
\revised{The solid lines show the dissimilarity between the UEs \textsf{A} and \textsf{B}, as well as \textsf{C} and \textsf{D} in the various geometries.  The dotted lines indicate the UEs located on the same incident rays, i.e., \textsf{A} and \textsf{C}, as well as \textsf{D} and \textsf{B}.}
	In radio geometry, the acquired CSI misrepresents the true Tx distance due to path-loss. 
\revised{Concretely, UEs far away in spatial geometry appear similar in radio geometry and vice versa.}
	To compensate for this distortion effect, we perform CSI scaling that unwraps radio geometry into feature geometry that better represents an Euclidean space.}
	\label{fig:featurescaling}
\end{figure}

\subsection{Step 1: CSI Scaling}
One of the most critical aspects in the design of good features for CC
is to realize that CSI in radio geometry is a poor representation of
spatial geometry. \fref{fig:featurescaling} illustrates this aspect.
Assume that the two Txs \textsf{A} and \textsf{B} are close to the Rx,
and the Txs \textsf{C} and \textsf{D} are further away. Due to
path-loss, the CSI measurements $\bH_\mathsf{C}$ and $\bH_\mathsf{D}$
of Txs~\textsf{C} and \textsf{D} appear weaker (i.e., have small
Frobenius norm) than those of the Txs nearby, $\bH_\mathsf{A}$
and $\bH_\mathsf{B}$. If we now directly compare the Frobenius
distance between \textsf{C} and~\textsf{D}, their distance appears to
be smaller than that between~\textsf{A} and \textsf{B} (because they
have small norm), even though they should be further apart. To
compensate for this phenomenon, we ``unwrap'' the CSI so that it is
more compatible with 
spatial geometry (see \fref{fig:featurescaling}). This approach is called \emph{CSI scaling} and explained next.

Consider a transmitter that is separated $d$ meters from a uniform linear array (ULA) with $B$ antennas. Assume a narrowband LoS channel without scatterers and a 2-dimensional plane wave model (PWM). For this scenario, each entry $b$ of the normed\footnote{The vector's $\bmh$ phase is rotated so that $h_1$ is real and positive.} CSI vector $\bmh\in\complexset^{B}$ is given by \cite{tse2005fundamentals}
\revised{\begin{align} \label{eq:VLoS}
h_b = d^{-\rho} \exp\!\left(-j\frac{2\pi}{\lambda}\Delta r (b-1) \cos(\phi)\right)
\end{align}
for $b=1,\ldots,B$,}
where $\rho>0$ is the  path-loss exponent, 
$\Delta r$ is the antenna spacing, and $\phi$ is the incident angle of the Tx to the Rx. 
Let $\bar\bH=\bmh\bmh^H$ be the associated R2M. 
As in \fref{fig:featurescaling},  assume two Txs $\textsf{A}$ and~$\textsf{C}$ with the same incident angle $\phi$ but with distances~$d_\textsf{A}$ and~$d_\textsf{C}$ to the receiver. 
Our goal is now to scale the CSI matrices so that the Frobenius distance $d_h(\tilde\bH_\textsf{A},\tilde\bH_\textsf{C})=\|\tilde\bH_\textsf{A}-\tilde\bH_\textsf{C}\|_F$ of the scaled moments $\tilde\bH_\textsf{A}$ and~$\tilde\bH_\textsf{C}$ is exactly their true distance.
\revised{For the above LoS scenario, we have the following result.
\begin{lem}
Consider the LoS channel model in \fref{eq:VLoS}.
Assume two UEs $\textsf{A}$ and~$\textsf{C}$ with the same incident angle $\phi$, with distances~$d_\textsf{A}$ and~$d_\textsf{C}$ to the BS.
By  scaling the R2M of both UEs as 
\begin{align} \label{eq:featurescaling}
\tilde\bH =  \frac{B^{\beta-1}}{\|\bar\bH\|^\beta_F}  \bar\bH \quad \text{with} \quad \beta = 1 + 1/(2\sigma),
\end{align}
the distance $d_h(\tilde\bH_\textsf{A},\tilde\bH_\textsf{C})=\|\tilde\bH_\textsf{A}-\tilde\bH_\textsf{C}\|_F$ of the scaled moments $\tilde\bH_\textsf{A}$ and~$\tilde\bH_\textsf{C}$ is exactly their true distance 
\begin{align} \label{eq:scalingdistance}
\textstyle d_h\!\left(\tilde\bH_\textsf{A},\tilde\bH_\textsf{C}\right) = |d_\textsf{A}-d_\textsf{C}|
\end{align}
if the parameter $\sigma\in(0,\infty]$ matches the path-loss exponent $\rho$. 
\end{lem}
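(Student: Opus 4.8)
The plan is to exploit the rank-one structure of the raw $2^{\text{nd}}$ moment under the scatterer-free LoS plane-wave model, which collapses the Frobenius-distance computation to a one-dimensional identity in the Tx--Rx distances. First I would substitute the channel model \eqref{eq:VLoS} and write $\bmh = d^{-\rho}\bmv$, where $\bmv\in\complexset^{B}$ is the array response vector with entries $v_b = \exp(-j\frac{2\pi}{\lambda}\Delta r\,(b-1)\cos\phi)$. Since every entry of $\bmv$ has unit modulus, $\|\bmv\|_2^2 = B$, so the R2M is the rank-one matrix $\bar\bH = \bmh\bmh^H = d^{-2\rho}\,\bmv\bmv^H$ with Frobenius norm $\|\bar\bH\|_F = d^{-2\rho}\,\|\bmv\bmv^H\|_F = d^{-2\rho}\,\|\bmv\|_2^2 = d^{-2\rho} B$.

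Next I would insert these expressions into the scaling rule \eqref{eq:featurescaling}. A short manipulation of the powers of $B$ and $d$ gives
\[
\tilde\bH \;=\; \frac{B^{\beta-1}}{(d^{-2\rho}B)^{\beta}}\,d^{-2\rho}\,\bmv\bmv^H \;=\; B^{-1}\,d^{\,2\rho(\beta-1)}\,\bmv\bmv^H ,
\]
so that the distance dependence is isolated in the scalar factor $d^{\,2\rho(\beta-1)}$ while the ``direction'' $\bmv\bmv^H$ depends only on the (common) incident angle $\phi$. For the two UEs $\textsf{A}$ and $\textsf{C}$ sharing $\phi$ we have $\bmv_\textsf{A} = \bmv_\textsf{C}$, hence $\tilde\bH_\textsf{A}-\tilde\bH_\textsf{C} = B^{-1}\big(d_\textsf{A}^{\,2\rho(\beta-1)} - d_\textsf{C}^{\,2\rho(\beta-1)}\big)\bmv\bmv^H$; taking Frobenius norms and using $\|\bmv\bmv^H\|_F = B$ cancels the $B^{-1}$ and leaves
\[
d_h\!\left(\tilde\bH_\textsf{A},\tilde\bH_\textsf{C}\right) \;=\; \big|\,d_\textsf{A}^{\,2\rho(\beta-1)} - d_\textsf{C}^{\,2\rho(\beta-1)}\,\big| .
\]
To close, I would substitute $\beta - 1 = 1/(2\sigma)$, turning the exponent into $\rho/\sigma$; under the hypothesis $\sigma = \rho$ this exponent equals $1$, the map $d\mapsto d^{\rho/\sigma}$ becomes the identity, and we obtain \eqref{eq:scalingdistance}, i.e., $d_h(\tilde\bH_\textsf{A},\tilde\bH_\textsf{C}) = |d_\textsf{A}-d_\textsf{C}|$.

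There is no deep obstacle; the argument is essentially a bookkeeping exercise. The one place where care is needed is tracking the exponents through the normalization so that the powers of $B$ cancel exactly and the $d$-exponent collapses to $\rho/\sigma$ --- and in noting that the hypothesis ``$\sigma$ matches $\rho$'' is used precisely (and only) to make the power function $d\mapsto d^{\rho/\sigma}$ the identity. It is worth remarking, in passing, that for a general choice $\sigma \neq \rho$ the same computation yields $d_h(\tilde\bH_\textsf{A},\tilde\bH_\textsf{C}) = |d_\textsf{A}^{\,\rho/\sigma} - d_\textsf{C}^{\,\rho/\sigma}|$, a monotone (hence order-preserving) warp of the true distance, and that exactness relies on the scatterer-free LoS PWM assumption, under which $\bar\bH$ is exactly rank one with the distance entering only through the scalar $d^{-2\rho}$.
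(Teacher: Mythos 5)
Your computation is correct and is exactly the explicit version of the paper's one-line argument: the paper merely asserts that the result ``follows immediately'' because the two UEs share the same (unit-modulus) steering vector and differ only in path loss, and your bookkeeping with $\|\bar\bH\|_F=d^{-2\rho}B$ and the exponent $\rho/\sigma$ is precisely what that assertion hides. No gap; your closing remark on the monotone warp for $\sigma\neq\rho$ is a nice bonus but not needed.
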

\begin{proof}
The proof follows immediately from the requirement in \fref{eq:scalingdistance} and the fact that both users  $\textsf{A}$ and~$\textsf{C}$ are associated with the same channel vectors~$\bmh$ given by the LoS model in \fref{eq:VLoS} that only differ in terms of the path loss. 
\end{proof}}

\revised{Since $\beta\geq1$, CSI from transmitters far away is amplified and nearby CSI is attenuated. 
In words, feature scaling as in~\fref{eq:featurescaling} unwraps the radio geometry as illustrated in \fref{fig:featurescaling}.}

\begin{rem}
\revised{As the path-loss exponent $\rho>0$ is often unknown in practice, we can use the parameter $\sigma$ in~\fref{eq:featurescaling}  as a tuning parameter. As shown in \fref{sec:results}, $1\leq\sigma\leq16$ yields excellent CC quality (in terms of TW and CT) for various scenarios. Furthermore, as seen from \fref{eq:featurescaling}, the extreme case of $\sigma\to\infty$ ignores the magnitude of CSI altogether; this is, for example, useful in multi-user systems that deploy transmit-power control or in scenarios in which shadowing effects are dominating.}
\end{rem}

\subsection{Step 2: Feature Transform}
We are now ready to transform the scaled CSI moments~$\tilde\bH$ into channel features. Since we focus on the Frobenius distance as dissimilarity, a straightforward choice of a channel feature is to set the feature directly to the scaled CSI moments $\bF=\tilde\bH$; we denote this feature by ``$\complexset\{\cdot\}$''.  
However, as we will show in \fref{sec:featureanalysis}, applying certain nonlinear transforms to the scaled CSI moments can significantly improve the feature quality.  In particular, we also consider taking the entry-wise real part (denoted by ``$\Re\{\cdot\}$''), imaginary part (denoted by ``$\Im\{\cdot\}$''),  angle (denoted by ``$\angle(\cdot)$''), or absolute value (denoted by ``$|\cdot|$'')  of the scaled CSI moments. We furthermore say that all these channel features were taken in the \emph{antenna domain} (denoted by ``Ant.''). 
\revised{We also consider the case in which we take the scaled CSI vectors and transform then into the \emph{angular domain} (denoted by ``Ang.'') followed by one of the nonlinearities mentioned above. For the scaled R2M, denoted by $\tilde\bH$, we compute~$\bD \tilde\bH\bD^H$, where $\bD$ is the $M\times M$ discrete Fourier transform matrix  that satisfies $\bD^H\bD=\bI_M$. This approach transforms the scaled CSI moments from the antenna domain into the angular (or beamspace) domain, which represents the incident angles of the Tx and potential scatterers to the array in a concise way \cite{brady2013beamspace}. We then either use this feature directly or apply one of the above mentioned nonlinearities.}

\subsection{Feature Analysis and Comparison}
\label{sec:featureanalysis}
We now evaluate the effectiveness of the channel features discussed above. We first detail the simulation parameters, and then evaluate the associated CT and TW measured between spatial geometry and radio geometry. 

\begin{table}[tp]
	\centering
	\renewcommand{\arraystretch}{1.2}
	\caption{Key Parameters of the Quadriga NLoS Channel Model~\cite{jaeckel2014quadriga}}
	\label{tbl:quadrigaparameters}
	\begin{tabular}{@{}lc@{}}
		\toprule
		{Parameter} & {Setting} \\
		\midrule
		Scenario & BERLIN\_UMa\_NLOS \\
		Carrier frequency      & $f_{c} =$ 2.0\,GHz      \\
		Channel bandwidth		& $\textit{BW}=$ 312.5\,KHz\\
		Number of BS antennas   & $B$ =  32 \\
		Antenna array &   ULA with $\Delta r=\lambda/2$ \\
		\bottomrule
	\end{tabular}
\end{table}

\begin{table*}[tp]
	\centering
	\renewcommand{\arraystretch}{1.2}
	\begin{minipage}[c]{2\columnwidth}
		\centering
		\caption{Comparison of channel features extracted from the Raw $2^\text{nd}$ Moment (R2M) \hspace{\textwidth} in terms of global trustworthiness (TW) and continuity (CT) for $K=0.05N$}
		\label{tbl:continuitytable}
		\begin{tabular}{@{}lcccccc@{}}
			\toprule
 Domain & &  $\complexset\{\cdot\}$ & $\Re\{\cdot\}$ & $\Im\{\cdot\}$ & $\angle(\cdot)$  & $|\cdot|$ \\
			\midrule
%
			 \multirow{2}{*}{Antenna}  & TW & $0.76$ $(\pm0.11)$ & $0.62$ $(\pm0.12)$ & $0.70$ $(\pm0.09)$ & $0.67$ $(\pm0.09)$ & $0.54$ $(\pm0.07)$   \\ 
			   & CT & $0.76$ $(\pm0.07)$ & $0.71$ $(\pm0.07)$ & $0.69$ $(\pm0.08)$ & $0.63$ $(\pm0.08)$ & $0.56$ $(\pm0.09)$   \\ 
			   \midrule
			   \multirow{2}{*}{Angular} & TW  & see TW above   & $0.76$ $(\pm0.12)$ & $0.56$ $(\pm0.08)$ & $0.55$ $(\pm0.07)$  & $\bf 0.81$ $(\pm0.13)$ \\
			 &  CT &  see CT above   & $0.74$ $(\pm0.07)$ & $0.52$ $(\pm0.06)$ & $0.53$ $(\pm0.09)$ &  $\bf 0.84$ $(\pm0.09)$ \\
			
			\bottomrule
		\end{tabular}
	\end{minipage}
\end{table*}

\subsubsection{Simulation Setup}
\revised{We consider a scenario as depicted in \fref{fig:CCexamplea} with a narrowband non-LoS (NLoS) channel generated from the Quadriga channel model~\cite{jaeckel2014quadriga}; the key parameters are summarized in \fref{tbl:quadrigaparameters}. We record CSI of $N=2048$ randomly selected (with the exception of the  ``VIP'' curve, which have been placed to form a contiguous curve) spatial locations within a square area of $1000$\,m $\times$ $500$\,m; the median distance between nearest neighbors is approximately $7.86$ meters, i.e., we sample CSI in space at roughly $53$ wavelengths. We acquire CSI at an SNR of $0$\,dB, average over $T=10$ time instants, and set $\sigma=16$.}

\subsubsection{Feature Comparison}
\fref{tbl:continuitytable} summarizes the global TW and CT for a range of channel features with a neighborhood of  $K=0.05N$; the numbers in the parentheses indicate the standard deviation over the point-wise TW and CT measures.  
\revised{We see that the absolute value of R2M in the angular domain yields high TW and CT values. 
Other features, such as the absolute value of the R2M in the \emph{antenna} domain perform poorly.  
In summary, we observe that---given appropriate channel features---even challenging NLoS channel scenarios at low SNR exhibit surprisingly high TW and CT.} This observation supports the validity of Assumption~\ref{ass:assumption1} and paves the way for the CC methods proposed next.

\begin{rem}
We conducted the same experiments for a ``vanilla'' LoS (V-LoS) channel as in \fref{eq:VLoS} as well as a Quadriga-based LoS (Q-LoS) channel, and we arrived at the same conclusions. \revised{We emphasize that absolute value of the R2M in the angular domain turned out to be the most robust channel feature for all considered channel models and scenarios.} 
\end{rem}

\section{Channel Charting Algorithms}
\label{sec:channelchartingalgorithms}

We now introduce three distinct CC algorithms with varying complexity, flexibility, and accuracy. We propose principal component analysis (PCA), Sammon's mapping (and a variation theoreof), and autoencoders in the context of CC. For each method, we briefly discuss the pros and cons. Corresponding channel chart results are shown in \fref{sec:results}. 

\subsection{Principal Component Analysis}
\label{sec:PCA}
As a baseline charting algorithm, we perform PCA \cite{pearson1901liii,hotelling1933analysis} on a centered version of the channel features. PCA is among the most popular \emph{linear} and \emph{parametric} methods for dimensionality reduction and maps a  high-dimensional point set (the channel features) into a low-dimensional point set (the channel chart) in an unsupervised manner. \revised{The specific method we use for channel charting is detailed next.}

\subsubsection{Algorithm}
We collect all $N$ channel features, vectorize them, and concatenate them in the $M'\times N$ matrix $\underline{\bF} = [\bmf_1,\ldots,\bmf_N]$. We then normalize each row of $\underline{\bF}$ to have zero empirical mean; we call the resulting matrix $\bar{\underline{\bF}}$. We then compute an eigenvalue decomposition on the empirical covariance matrix of the centered channel features so that $\bar{\underline{\bF}}^H\bar{\underline{\bF}} = \bU\boldsymbol\Sigma\bU^H$. Here, the $N\times N$ matrix $\bU$ is unitary, i.e., $\bU^H\bU=\bI_N$, and~$\boldsymbol\Sigma$ is a diagonal matrix with the $N$ eigenvalues on the main diagonal sorted in descending order of their value (assuming all eigenvalues are real-valued), i.e., $\boldsymbol \Sigma=\mathrm{diag}(\sigma_1,\ldots,\sigma_N)$ so that $\sigma_k \geq \sigma_\ell$ for $1\leq  k<\ell\leq N$. 
Finally, we compute the $D'\times N$ matrix containing the low-dimensional points in the channel chart $\bZ=[\bmz_1,\ldots,\bmz_N]$. 
Let~$\bmu_d$ denote the $d$th column of $\bU$. Then, the channel chart obtained via PCA is given by 
\begin{align} \label{eq:PCAresult}
\bZ_\text{PCA}=\left[\sqrt{\sigma_1}\bmu_1,\ldots,\sqrt{\sigma_{D'}}\bmu_{D'}\right]^H.
\end{align}

\subsubsection{Pros and Cons}
PCA is straightforward to implement and can be carried out in a computationally efficient manner using power iterations \cite{mises1929praktische,GV96}.
However, as shown in \fref{sec:results}, PCA performs worse in terms of TW and CT than the nonlinear CC methods proposed in the next two subsections. 

\subsection{Sammon's Mapping}
\label{sec:sammons}
Sammon's mapping (SM) \cite{sammon1969nonlinear} is a classical nonlinear method that maps a high-dimensional point set into a point set of lower dimensionality with the goal of retaining small pairwise distances between both point sets---exactly what we wished for in~\fref{sec:operatingprinciples}. 
We next describe SM for CC in detail, explain an efficient algorithm to compute the channel chart, and propose a modified version that takes into account  side information (called SM+ in what follows).

\subsubsection{SM Basics}
First, we compute a pairwise distance matrix~$\bD$ of all channel features 
\begin{align*}
D_{n,\ell} = d_f(\bF_n,\bF_\ell), \,\, n=1,\ldots,N, \,\, \ell=1,\ldots,N,
\end{align*}
where we use the Frobenius distance (see \fref{sec:moments}).
SM tries to find a low-dimensional channel chart, i.e., a point set $\{\bmz_n\}_{n=1}^N$, that results from the following optimization problem:
\begin{align*}
\text{(SM)}\,\, \left\{\begin{array}{ll}
\underset{\begin{subarray}{c}
\bmz_n\in\reals^{D'}\\
n=1,\ldots,N
\end{subarray}}{\text{minimize}} &
\underset{\begin{subarray}{c}
n=2,\ldots,N\\
\ell=1,\ldots,n-1
\end{subarray}}{\displaystyle\sum}\!\!\!
{D^{-1}_{n,\ell}}(D_{n,\ell}-\|\bmz_n-\bmz_\ell\|_2)^2 \\
\text{subject to} &  \displaystyle \sum_{n=1,\ldots,N} \bmz_n = \bZero_{D'\times1},
\end{array}\right.
\end{align*}
where we omit pairs of points for which $D_{n,\ell}=0$. The objective function of SM promotes channel charts for which the Euclidean distance of pairs of nearby points in $\reals^{D'}$ agrees with the feature distance. Points for which~$D_{n,\ell}^{-1}$ is small (i.e., points that are dissimilar in feature geometry) are discounted; this ensures that SM retains small pairwise distances between both point sets.
Since the objective function is invariant to global translations, we use a constraint that  enforces the channel chart to be centered in each of the coordinates in $\reals^{D'}$.

\subsubsection{Forward-Backward Splitting for SM}
The problem (SM) is non-convex and typically solved using quasi-Newton methods~\cite{nocedal2006numerical}. We next detail an efficient first-order method that   enables us to include side information that is available for CC. 
We use an accelerated forward-backward splitting (FBS) procedure \cite{BT09,GSB14} that solves a class of convex optimization problems of the following general form:
\begin{align*}
\underset{\bZ\in\reals^{D'\times N}}{\text{minimize}}\,\, f(\bZ) + g(\bZ),
\end{align*}
where the function $f(\bZ) = \sum_{n=1}^K f_n(\bmz_n)$ should be convex and smooth and $g$ should be convex, but must not be  smooth or bounded. 
FBS mainly consists of the simple iteration
\begin{align*}
\bZ^{(t+1)} = \text{prox}_g\!\left(\bZ^{(t)} -\tau^{(t)} \grad\! f(\bZ^{(k)}),\tau^{(t)}\right)
\end{align*}
for $t=1,\ldots,T_\text{max}$ or until convergence. Here,  $\grad\!f(\bZ)$ is the gradient of the smooth function $f$, and the proximal operator for the nonsmooth function $g$ is defined as \cite{nealboyd2013proximal}
\begin{align*}
\text{prox}_g(\bZ,\tau) = \argmin_{\bV} \left\{\tau g(\bV) + \frac{1}{2}\|\bV-\bZ\|_F^2\right\}.
\end{align*}
The sequence $\{\tau^{(t)}>0\}$ contains carefully selected step-size parameters that ensure convergence of FBS. 

For CC, the matrix $\bZ=[\bmz_1,\ldots,\bmz_N]$ contains all points in the channel chart.   The function $f$ is chosen to be
\begin{align} \label{eq:ffunctionSM}
f(\bZ) = \underset{\begin{subarray}{c}
n=2,\ldots,N\\
\ell=1,\ldots,n-1
\end{subarray}}{\displaystyle\sum}\!\!\!
{D^{-1}_{n,\ell}}(D_{n,\ell}-\|\bmz_n-\bmz_\ell\|_2)^2,
\end{align}
and the $n$th column of the gradient of $f$  is  
\begin{align*}
[\grad\!f(\bZ)]_n =  \underset{\begin{subarray}{c}
\ell=1,\ldots,n \\
\ell\neq n
\end{subarray}}{\displaystyle2\sum}\!\!\!
{D^{-1}_{n,\ell}}(D_{n,\ell}-\|\bmz_n-\bmz_\ell\|_2) \frac{\bmz_n-\bmz_\ell}{\|\bmz_n-\bmz_\ell\|_2}.
\end{align*}
The centering constraint in (SM) is enforced by choosing 
\begin{align*}
g(\bZ) =  \chi\!\left(\sum_{n=1}^N \bmz_n \!\right)\!,
\end{align*}
where the ``characteristic function'' $\chi$ is zero when its argument $\sum_{n=1}^N \bmz_n$ is zero, and infinity otherwise.
The proximal operator of this characteristic function is simply a re-projection onto the centering constraint given by  
\begin{align*}
\text{prox}_g(\bZ,\tau) = \bZ-\frac{1}{K}\bZ\bOne_{N\times1}\bOne_{N\times1}^T.
\end{align*}

\begin{rem}
Since the function~$f$ is nonconvex, FBS is not guaranteed to find a global minimizer. 
We will demonstrate in \fref{sec:results} that FBS with a suitable initialization and step-size criterion yields excellent CC results in a computationally efficient manner. Concretely, we initialize FBS with the solution from PCA $\bZ^{(1)}=\bZ_\text{PCA}$ as detailed in \fref{sec:PCA} and we deploy the adaptive step-size procedure proposed in~\cite{GSB14}.
\end{rem}

\subsubsection{SM with Side-Information}
\label{sec:sammonplus}
We now provide an example of how CC can be improved with side information. Note that the methods in this section remain unsupervised as they do not require information about the transmitter's spatial locations.

In practice, one often collects many CSI vectors from a single transmitter (e.g., a UE). In this case, the channel features for a given transmitter $u$ form a time series $\{\bmf_n\}_{n \in \setN_u}$, where~$\setN_u$ contains the temporally ordered channel feature indices associated with UE $u$.
Since transmitters move with finite velocity,  we know that temporally adjacent CSI vectors from the same UE should lie close together in the channel chart.
To exploit this information, we include a squared $\ell_2$-norm penalty in the objective function that keeps temporally adjacent points in $\setN_u$ nearby in the channel chart. Concretely, for each transmitter $u$, we add 
\begin{align*}
f_u(\bZ)  = \alpha_u \sum_{n\in\setN_u}  \|\bmz_n-\bmz_{n+1}\|^2_2
\end{align*}
to the objective of (SM), where the parameter $\alpha_u>0$ determines the spatial smoothness of transmitter $u$ in the channel chart. 
The $n$th row of the gradient of this penalty can be computed effectively and is given by
\begin{align*}
[\grad\!f_u(\bZ)]_n  = 2  \alpha_u \! \left(\left(\bmz_n-\bmz_{n+1}) + (\bmz_n-\bmz_{n-1}\right)\right)
\end{align*}
for $n\in\setN_u$. 
In what follows, we refer to the resulting CC algorithm as Sammon's mapping \emph{plus} (SM+).


\subsubsection{Pros and Cons}
The main advantages of SM/SM+ are that (i) they directly implement the desirables for CC summarized in \fref{sec:operatingprinciples}, which results in excellent TW and CT (see \fref{sec:results} for results), and (ii) temporal side information is easily included. The drawbacks are that (i) they are nonparametric, which would require an out-of-sample extension procedure as proposed in, e.g.,~\cite{Li2005}, if new points need to be mapped without relearning the channel chart,  and (ii) the complexity is substantially higher than that of PCA.  

\subsection{Autoencoder}
Autoencoders (AEs)~\cite{bengio} are single- or multi-layer (deep) artificial neural networks that are commonly used for unsupervised dimensionality reduction tasks \cite{van2009dimensionality} and have shown to yield excellent performance on numerous real-world datasets~\cite{dimensionality}.
We now detail how AEs can be used for CC.

\subsubsection{Autoencoders for CC}
The basic idea of an AE is to learn two functions, an encoder $\setC:\reals^{M'}\to \reals^{D'}$ and a decoder $\setC^{-1}:\reals^{D'}\to \reals^{M'}$, with $M'>D'$, so that the average approximation error 
\begin{align} \label{eq:approximationerror}
E = \frac{1}{N}\sum_{n=1}^N\|\bmf_n - \setC^{-1}(\setC(\bmf_n))\|_2^2
\end{align}
for a set of vectors $\{\bmf_n\}_{n=1}^N$ is minimal. 
Since the codomain (outputs) of the encoder $\setC$ is typically of lower dimension than the domain (inputs), we have that
$\bmf_n\approx \setC^{-1}(\setC(\bmf_n)),$ but this is not a perfect equality.
The hope is that the AE implements a low dimensional representation $\bmz_n=\setC(\bmf_n)$ that captures the essential components of the inputs $\bmf_n$.

We now describe how AEs can be used for CC. First, it is important to realize that the encoder~$\setC$ directly corresponds to the forward charting function with $\bmf_n$ being the inputs; the decoder~$\setC^{-1}$ corresponds to the inverse charting function. Second, we will use multi-layer (or  deep) AEs~\cite{bengio} to learn the two functions $\setC$ and $\setC^{-1}$ in an unsupervised manner.

\begin{example} \label{ex:example3}
Consider a simple (shallow) AE whose encoder and decoder consist of a single layer, the inputs are the channel features, and the outputs of the decoder correspond to the points in the channel chart. Each layer first multiplies the inputs with a matrix (containing the weights) and adds a bias term; a (nonlinear) activation function  (also known as neuron) is then applied element-wise to generate the outputs.
Mathematically, such a shallow AE is described as follows:
\begin{align} 
\setC &: \bmz  = f_\text{enc}(\bW_\text{enc}\bmf + \bmb_\text{enc}) \label{eq:encoder}\\
\setC^{-1} &: \hat\bmf  = f_\text{dec}(\bW_\text{dec}\bmz + \bmb_\text{dec}).  \label{eq:decoder}
\end{align}
Here, the forward charting function $\setC$ (the encoder) first computes a matrix-vector product between the weight matrix $\bW_\text{enc}\in\reals^{D'\times M'}$ and the vectorized channel feature $\bmf$ (the inputs), followed by adding a bias vector $\bmb_\text{enc}\in\reals^{D'}$. The result of this operation is then passed through a nonlinear activation function $f_\text{enc}$ that operates element-wise on the entries of the argument. 
The inverse charting function $\setC^{-1}$ (the decoder) uses another weight matrix $\bW_\text{dec}\in\reals^{M'\times D'}$, bias vector $\bmb_\text{dec}\in\reals^{M'}$, and activation function~$f_\text{dec}$ to map the input $\bmz\in\reals^{D'}$ to the channel feature geometry in $\reals^{M'}$. 
\end{example}

In practice, one often resorts to  multi-layer (or so-called deep) AEs~\cite{bengio} instead of the shallow network discussed in Example~\ref{ex:example3}, as they often yield superior performance for many dimensionality-reduction tasks~\cite{van2009dimensionality}. 
For such deep AEs, one simply cascades the inputs and outputs of multiple single-layer networks as in \fref{eq:encoder} and \fref{eq:decoder}.
The key design parameters of such deep AEs are the number of layers~$L$ (per encoder and decoder), the dimensions of the weight matrices and bias vectors on each layer, and the activation function types for each layer---all these parameters are fixed at design time. 
During the CC procedure, one jointly learns the entries of the weight matrices $\{\bW_\text{enc}^{(l)}, \bW_\text{dec}^{(l)}\}$ and bias vectors $\{\bmb^{(l)}_\text{enc}, \bmb^{(l)}_\text{dec}\}$, where $l=1,\ldots,L$ denotes the layer index, solely from the set of channel features $\{\bmf_n\}_{n=1}^N$ so that the approximation error in~\fref{eq:approximationerror} is minimal.  Learning is typically accomplished by a procedure known as back-propagation \cite{bengio}, which is computationally efficient and scales favorably to large datasets. 

\begin{figure}[tp]
\centering
\includegraphics[width=0.99\columnwidth]{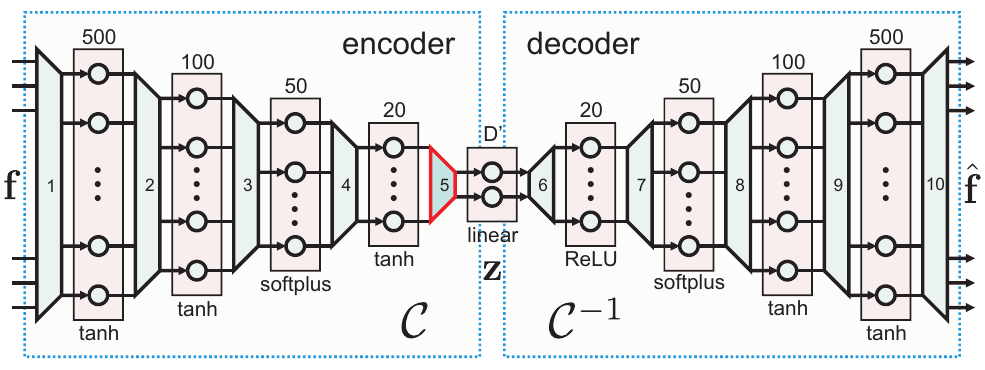}
\caption{Structure of the deep autoencoder used for CC. The entire artificial neural network consists of 10 layers; circles correspond to activation functions, trapezoids correspond to the weights and biases; the bottom text indicates the activation function type and the top text the output dimension of each layer. }
\label{fig:autoencoder}
\end{figure}


\subsubsection{Implementation Details}
We use a deep AE as illustrated in \fref{fig:autoencoder}. We carefully selected the number of layers and their dimensionality, as well as the involved activation functions. 
The encoder and decoder both consist of $L=5$ layers. 

The inputs of the encoder $\setC$ (the forward charting function) are the $M'$-dimensional channel features $\{\bmf_n\}_{n=1}^N$, the outputs correspond to points in the~$D'$ dimensional channel chart $\{\mathbf{z}_n\}_{n=1}^N$.
For each layer $l$, the linear operation with the weights~$\bW_\text{enc}^{(l)}$ and bias $\bmb_\text{enc}^{(l)}$ are represented by the trapezoids in \fref{fig:autoencoder}. 
For the layers $l=\{1,2,4\}$, we set the activation to the hyperbolic tangent function $f^{(l)}_\text{enc}(x)=\frac{e^{x}-e^{-x}}{e^{x}+e^{-x}}$. 
For layer three, we use the softplus function $f^{(3)}_\text{enc}(x)=\log(1+\exp(x))$. 
For layer five, we use the identity $f^{(5)}_\text{enc}(x)=x$. 
The number of neurons for each layer are as follows: $R^{(1)}=500$, $R^{(2)}=100$, $R^{(3)}=50$, $R^{(4)}=20$, and $R^{(5)}=D'$.

The inputs of the decoder $\setC^{-1}$ (the inverse charting function) are the points in the channel chart $\{\bmz_n\}_{n=1}^N$ of dimension $D'$, and the outputs   correspond to estimates of the $M'$-dimensional channel features $\{\hat\bmf_n\}_{n=1}^N$.
As shown in \fref{fig:autoencoder}, the decoder is essentially a mirrored version of the encoder, having the same number of neurons per layer (but in reverse order).  
The only difference is the activation function on the sixth layer, where we use the rectified linear unit (ReLU) defined as $f^{(6)}_\text{dec}(x)=\max\{x,0\}$ instead of a hyperbolic tangent.

To reduce the approximation error of our AE and to obtain better TW and CT values, the weights in layer $l=5$ have been regularized.  We include a squared Frobenius-norm regularizer on the entries of $\bW_\text{enc}^{(5)}$ (also known as weight decay) by using the following average approximation error: 
\begin{align*} 
E = \frac{1}{2N}\sum_{n=1}^N\|\bmf_n - \setC^{-1}(\setC(\bmf_n))\|_2^2 + \frac{\beta}{2}\|\bW_\text{enc}^{(5)}\|_{F}^2, 
\end{align*}
where the parameter $\beta>0$ was tuned for best performance. 
For learning of the AE, we use Tensorflow~\cite{tensorflow2015-whitepaper}.

\subsubsection{Pros and Cons}
The key advantages of AE-based CC compared to PCA, SM, and SM+ are as follows: (i) AEs directly yield a parametric mapping of the forward and inverse channel charting function and (ii) they can be trained efficiently, even for very large datasets. 
The key drawback is the fact that identifying good network topologies, activation functions, and learning-rate parameters for AEs is notoriously difficult and involves tedious and time-consuming trial-and-error efforts by the user~\cite{hunter2012NNchallenge}.	


\section{Results}
\label{sec:results}

We are finally ready to provide results for CC for  various channel models and the methods discussed above.

\subsection{Simulation Settings}
Each channel chart shown next is generated for the system scenario depicted in \fref{fig:CCexamplea}. 
We record CSI of $N=2048$ randomly placed (with the exception of the 234 points representing the ``VIP'' curve)  spatial locations within a square area of $1000$\,m $\times$ $500$\,m; the median sampling distance, measured in the spatial domain and between nearest neighbors, is approximately $53$ wavelengths. We acquire CSI at an SNR of $0$\,dB, average noise over $T=10$ samples, and set $\rho=16$. We compare results for a ``vanilla'' LoS channel (V-LoS) as in~\fref{eq:VLoS} at a carrier frequency of $2$\,GHz with $\lambda/2$ antenna spacing, and for Quadriga LoS (Q-LoS) and Quadriga NLoS (Q-NLoS) channels (see \fref{tbl:quadrigaparameters} for the model parameters). 
Since the analysis in \fref{sec:featureanalysis} revealed that the feature configuration $\{$R2M, Ant., $|\cdot|\}$  yields the most robust results with respect to CT and TW for all the above channel models (see Remark 4), we will generate channel charts solely for this channel feature. 
For each channel chart, we provide the global CT and TW values measured between spatial geometry and the channel chart for $K=0.05N$ nearest neighbors. In contrast to \fref{fig:CCexample}, which has been tuned for visual appearance, the channel charts shown next are optimized for best TW and CT values.

\setlength{\textfloatsep}{10pt}
\begin{figure*}[tp]
\centering
\subfigure[V-LoS, PCA, CT=$0.91$, TW=$0.84$]{\includegraphics[width=0.55\columnwidth]{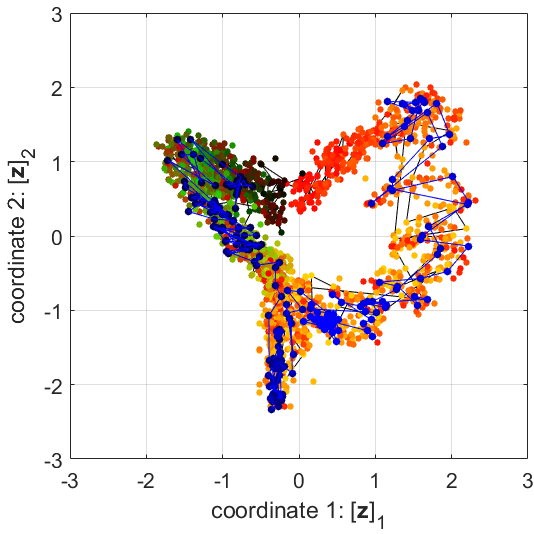}}
\hspace{0.8cm}
\subfigure[Q-LoS, PCA, CT=$0.91$, TW=$0.84$]{\includegraphics[width=0.55\columnwidth]{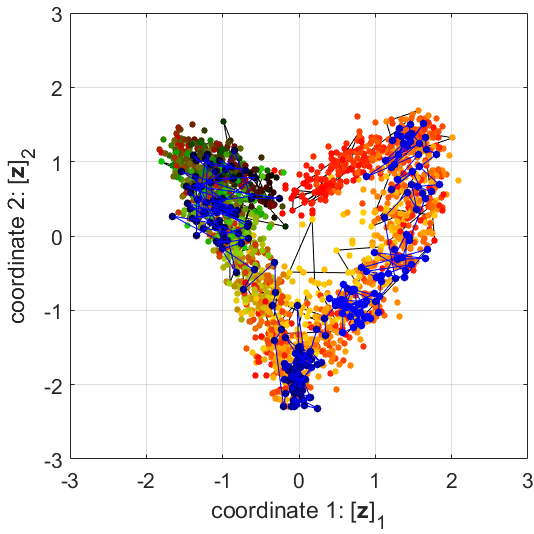}}
\hspace{0.8cm}
\subfigure[Q-NLoS, PCA, CT=$0.92$, TW=$0.85$]{\includegraphics[width=0.55\columnwidth]{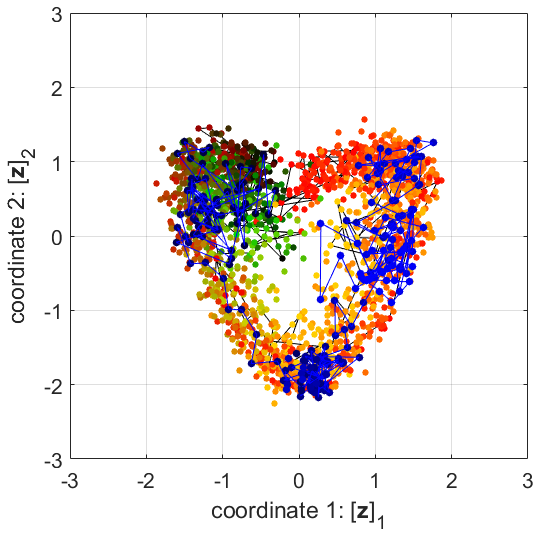}}
\subfigure[V-LoS, SM, CT=$0.93$, TW=$0.84$]{\includegraphics[width=0.55\columnwidth]{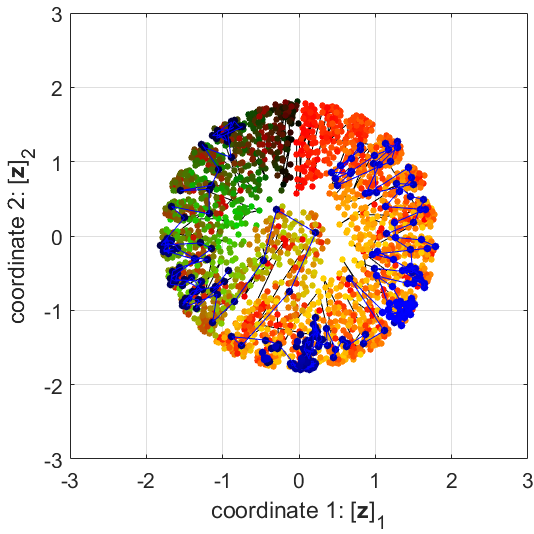}}
\hspace{0.8cm}
\subfigure[Q-LoS, SM, CT=$0.93$, TW=$0.86$]{\includegraphics[width=0.55\columnwidth]{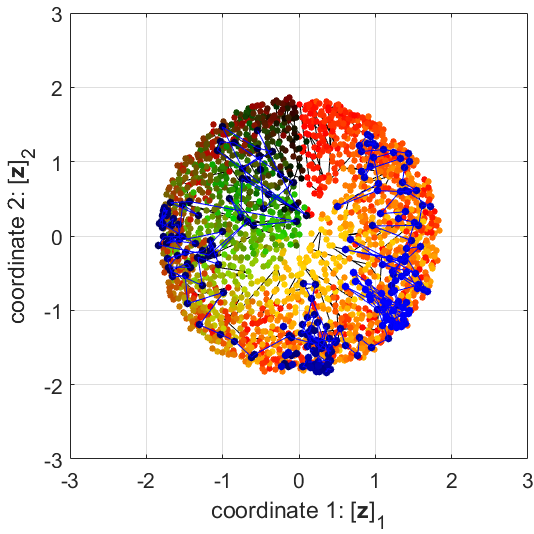}}
\hspace{0.8cm}
\subfigure[Q-NLoS, SM, CT=$0.93$, TW=$0.85$]{\includegraphics[width=0.55\columnwidth]{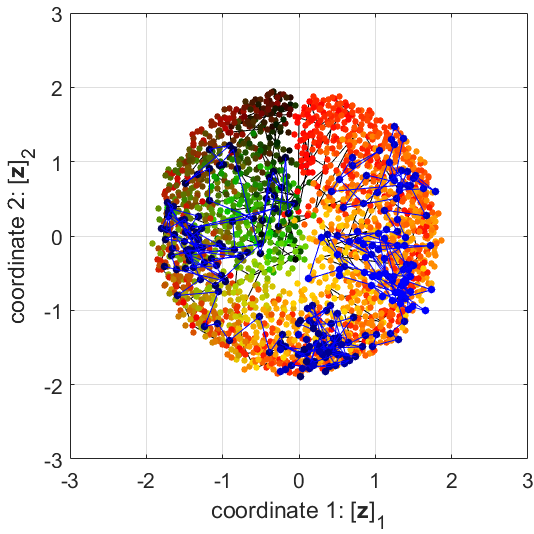}}
\subfigure[V-LoS, SM+, CT=$0.93$, TW=$0.84$]{\includegraphics[width=0.55\columnwidth]{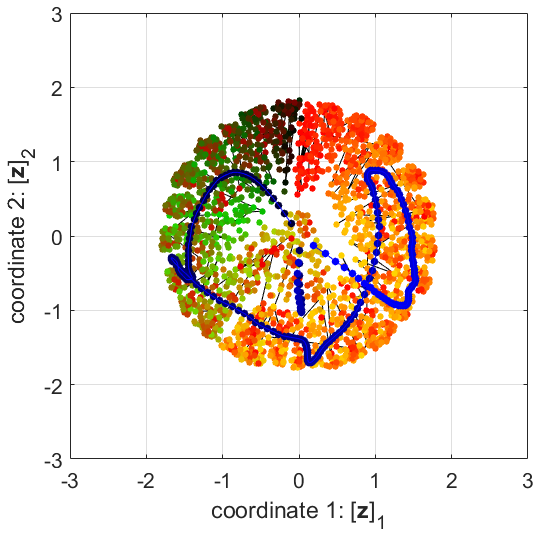}}
\hspace{0.8cm}
\subfigure[Q-LoS, SM+, CT=$0.93$, TW=$0.86$]{\includegraphics[width=0.55\columnwidth]{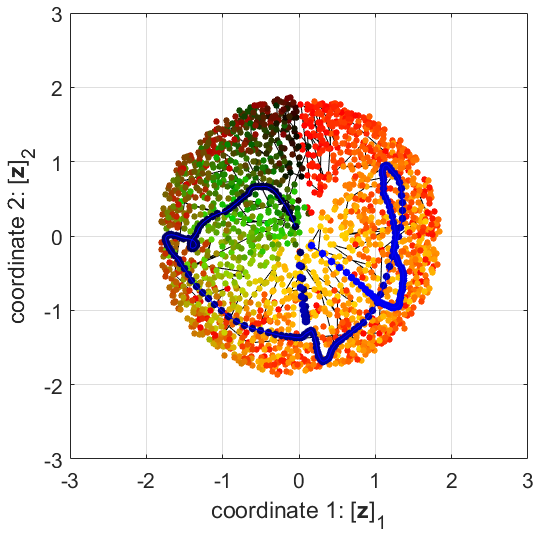}}
\hspace{0.8cm}
\subfigure[Q-NLoS, SM+, CT=$0.93$, TW=$0.85$]{\includegraphics[width=0.55\columnwidth]{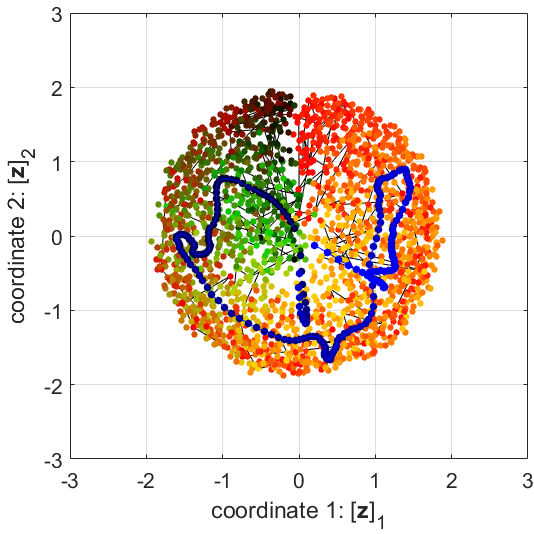}}
\subfigure[V-LoS, AE, CT=$0.94$, TW=$0.89$]{\includegraphics[width=0.55\columnwidth]{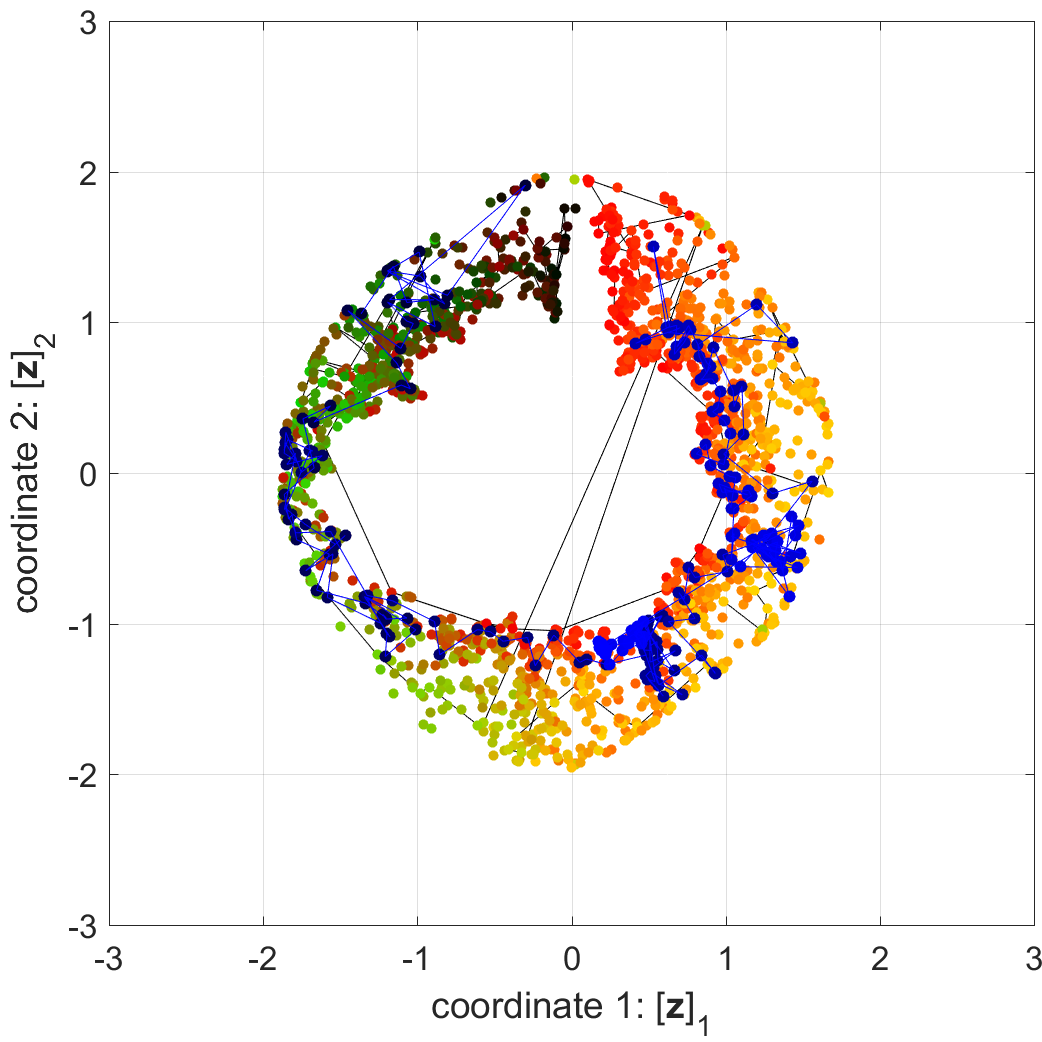}}
\hspace{0.8cm}
\subfigure[Q-LoS, AE, CT=$0.93$, TW=$0.86$]{\includegraphics[width=0.55\columnwidth]{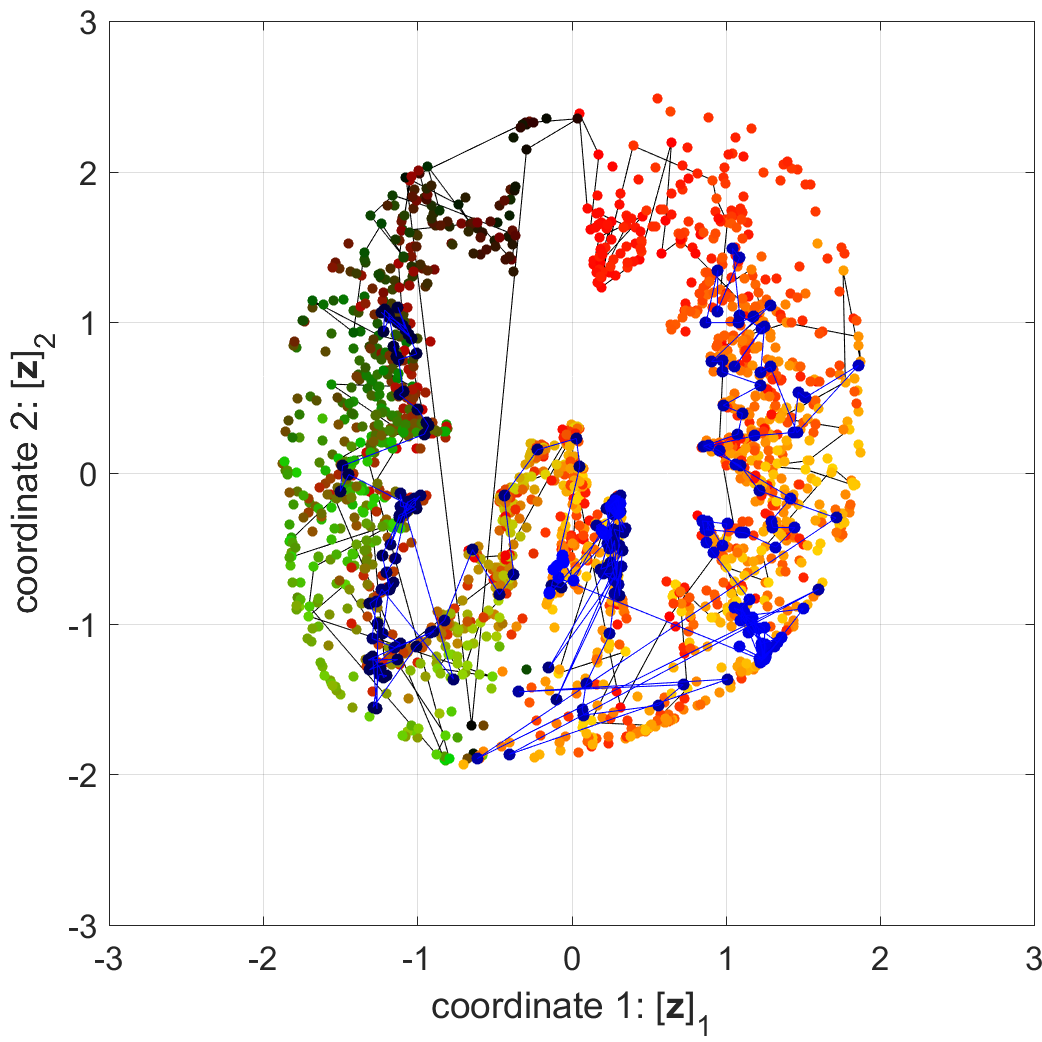}}
\hspace{0.8cm}
\subfigure[Q-NLoS, AE, CT=$0.91$, TW=$0.86$]{\includegraphics[width=0.55\columnwidth]{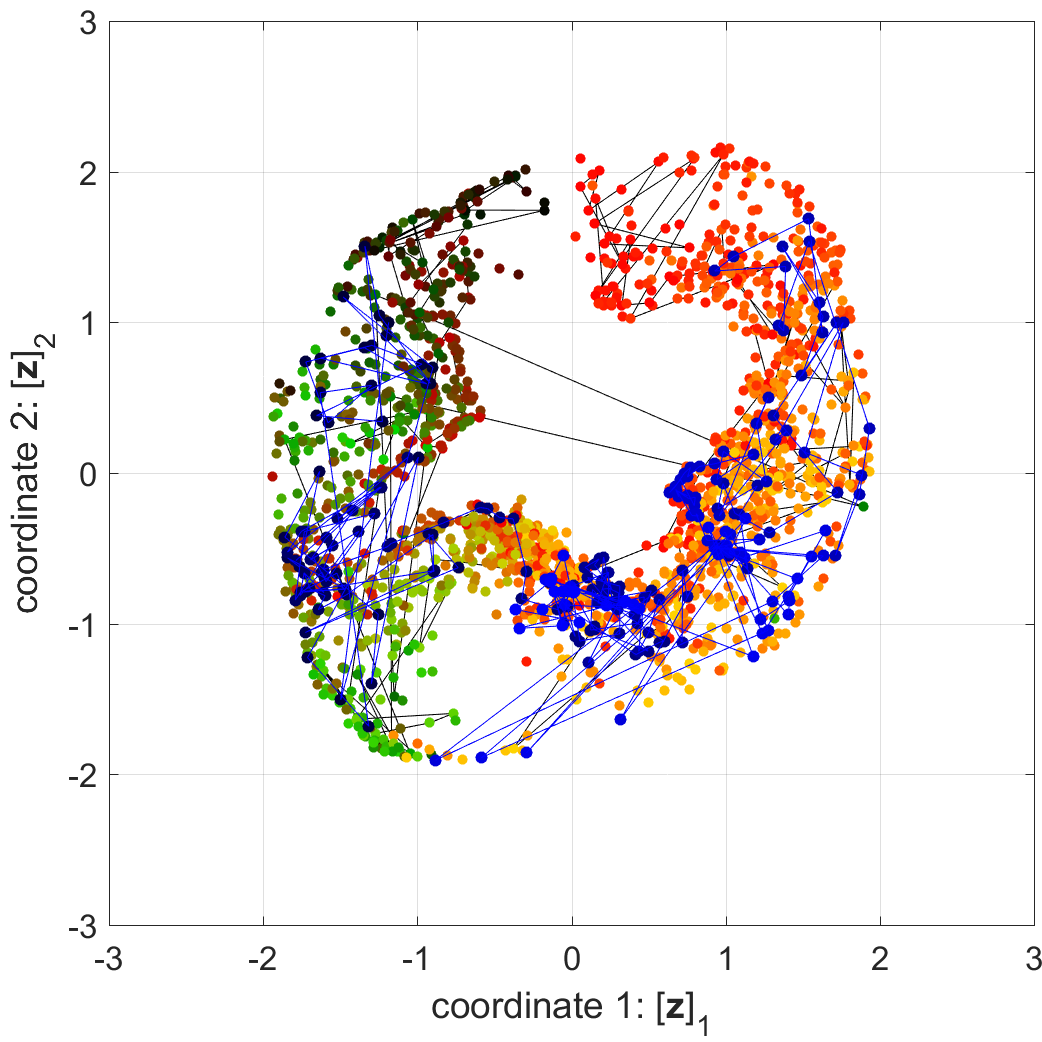}}
\caption{ Comparison of $D'=2$ dimensional channel charts for different channel models and CC algorithms. We compare ``vanilla'' LoS (V-LoS), Quadriga LoS (Q-LoS), and Quadriga non-LoS (Q-NLoS), with principal component analysis (PCA), Sammon's mapping (SM), Sammon's mapping with temporal continuity (SM+), and autoencoder (AE). We see that AE, SM, and SM+ achieve the highest CT and TW, whereas SM+ delivers the most visually pleasing results.}
\label{fig:channelcharts}
\end{figure*}

\setlength{\textfloatsep}{10pt}
\begin{figure*}[tp]
\centering
\subfigure[V-LoS, continuity (CT)]{\includegraphics[width=0.6\columnwidth]{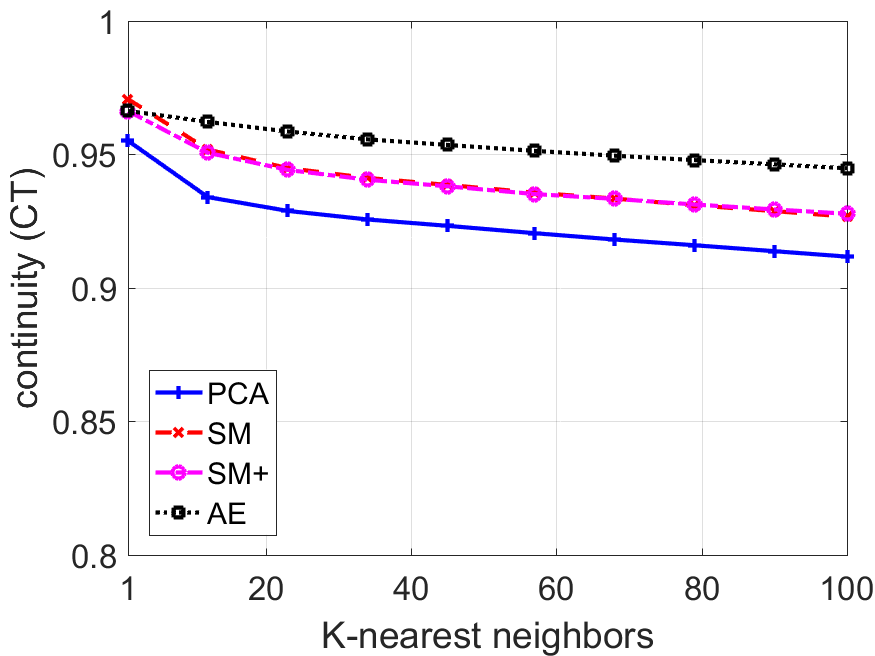}}
\hspace{0.6cm}
\subfigure[Q-LoS, continuity (CT)]{\includegraphics[width=0.6\columnwidth]{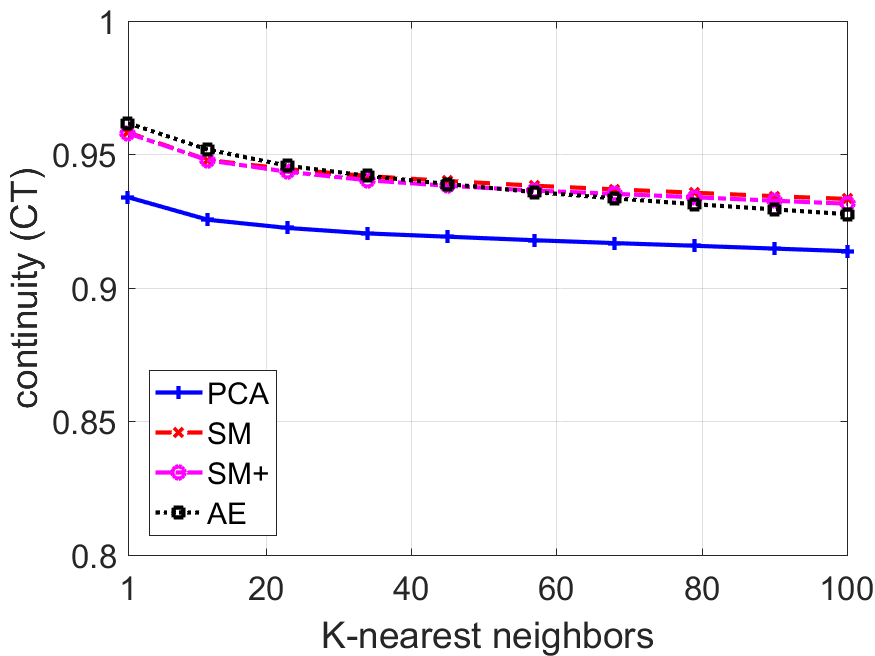}}
\hspace{0.6cm}
\subfigure[Q-NLoS, continuity (CT)]{\includegraphics[width=0.6\columnwidth]{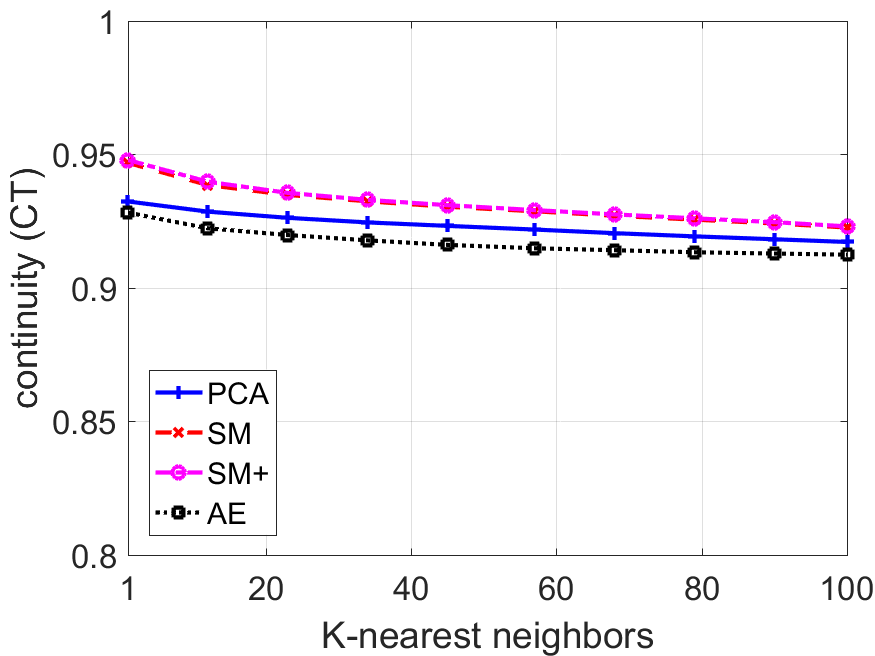}}
\subfigure[V-LoS, trustworthiness (TW)]{\includegraphics[width=0.6\columnwidth]{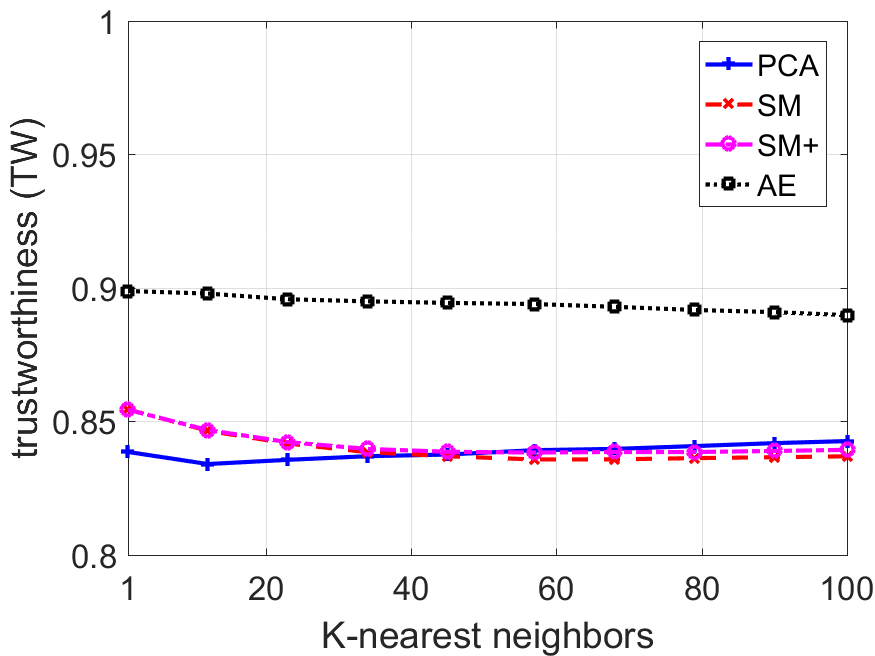}}
\hspace{0.6cm}
\subfigure[Q-LoS, trustworthiness (TW)]{\includegraphics[width=0.6\columnwidth]{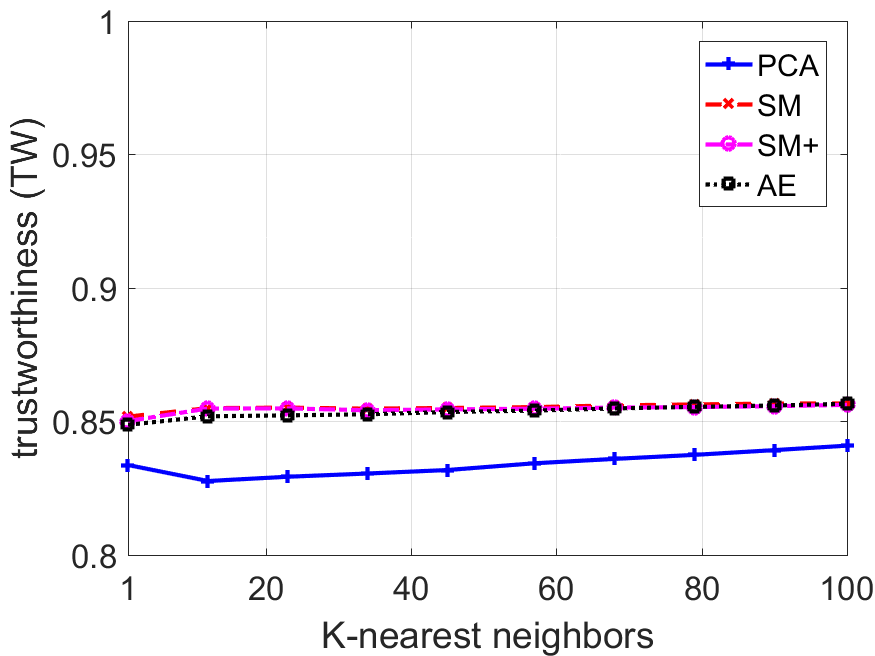}}
\hspace{0.6cm}
\subfigure[Q-NLoS, trustworthiness (TW)]{\includegraphics[width=0.6\columnwidth]{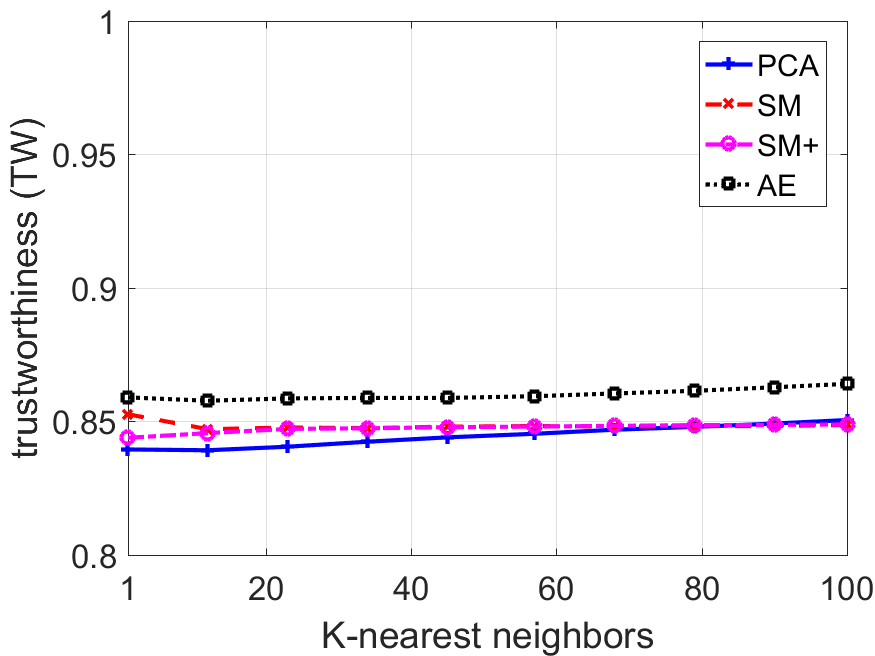}}
\caption{Comparison  of  continuity  (CT)  and  trustworthiness  (TW)  for  various  channel  models  and  CC  algorithms.  We  observe  that  autoencoders  (AEs)
outperform the other algorithms in terms of TW, while Sammon's mapping (SM) and its extension (SM+) perform only slightly worse. In terms of CT, however,
AEs only work well for simple LoS channels, whereas SM and SM+ perform better for channels generated from the Quadriga model. PCA yields surprisingly
good results across the board and performs close to that of SM and SM+ in terms of CT for more challenging channel scenarios (Q-LoS and Q-NLoS).}
\label{fig:CCquality}
\end{figure*}

\subsection{Channel Charts}
\fref{fig:channelcharts} shows learned channel charts for PCA, SM, SM+, and AE. For these CC algorithms and the three channel models, we obtain CT values between $0.91$  and $0.94$. This means that the neighborhood of a point in  spatial geometry is strongly preserved in the channel charts, i.e., most points nearby in the spatial geometry space are  nearby in the channel charts.  The TW values are also high, ranging between $0.84$ and $0.89$; this indicates that most neighbors of a point in the channel charts are also neighbors in spatial geometry. 
We can also visually inspect the obtained results, e.g., by comparing the color gradient in \fref{fig:channelcharts} with that of the scenario in  \fref{fig:CCexamplea} or that of the ``VIP'' curve in spatial geometry and in the channel chart. To facilitate such a visual comparison, we have rotated and scaled all channel charts (note that this does not affect CT and TW).  
 
The first row, Figures~\ref{fig:channelcharts}(a,b,c), shows the results for PCA. Quite surprisingly, PCA yields high CT and TW values for all channel models, and also provides a visually accurate embedding of the spatial geometry. 
This behavior is due to the fact that we use channel features that well-represent spatial geometry. 
The second row, Figures~\ref{fig:channelcharts}(d,e,f), shows the results for SM. SM yields superior CT and TW values than PCA and provides excellent preservation of the color gradients, especially for the two LoS scenarios. 
The third row, Figures~\ref{fig:channelcharts}(g,h,i), shows the results for SM+ in which we include spatial constraints obtained via temporal side-information. While the CT and TW values are nearly the same as that of SM, SM+ provides extremely well-preserved embeddings of the channel geometry, even for the challenging Q-NLoS scenario.
The last row, Figures~\ref{fig:channelcharts}(j,k,l), shows the results for the AE. The AE yields high CT and TW values, comparable to those of SM/SM+, but slightly lower CT for Q-NLOS. In addition, the channel charts are less visually pleasing than those of SM+, but demonstrate excellent preservation of local spatial geometry.

\subsection{CT and TW Measures}
To gain additional insight into the quality of the learned channel charts, \fref{fig:CCquality} shows the CT and TW values for different neighborhood sizes, i.e., $K$ ranges from $1$ to $100$. 
We see that, for the simplistic V-LoS channel, the AE provides the best performance, both in terms of CT and TW; SM and SM+ perform slightly worse, as does PCA. 
For the more realistic Q-LoS scenario that takes into account multi-path propagation, the performance of the AE drops significantly, while even PCA performs better. SM and SM+ have, once more, similar performance but perform better than the other two methods. 
For the most challenging scenario, the Quadriga non-LoS channel (Q-NLoS), SM and SM+ perform best, followed by PCA. Evidently, the AE struggles in achieving high CT. We address this issue to the fact that we train the AE only on the $N=2048$ points and the fact that we could spend another week in tuning the neural net architecture and learning rates.


\section{Conclusions}
\label{sec:conclusions}
We have proposed \emph{channel charting} (CC), a novel unsupervised
framework to learn a map between channel-state information (CSI)
acquired at a single base-station (BS) and the 
relative
transmitter (e.g., user equipment) locations. 
Our method relies on the extraction of suitable features from large amounts of high-dimensional CSI acquired at a massive MIMO BS, followed by CC algorithms that borrow ideas from dimensionality reduction and manifold learning.
We have developed four distinct CC algorithms with varying complexity,
flexibility, and accuracy that produce charts that preserve the local
geometry of the 
transmitter locations for a range of realistic channel models. 
Since channel charting is unsupervised, i.e., does not require
knowledge of the true user locations, the proposed method 
finds
\revised{use in numerous applications relevant to 5G networks, including (but not limited to) rate adaptation, network planning, user scheduling, hand-over, cell search, user tracking, user grouping for device-to-device (D2D) communication, beam prediction for mmWave or terahertz systems, and other cognitive tasks that rely on CSI and the relative UE movement to the BS.}

There are many avenues for future work. A mathematical analysis of the proposed feature extraction and CC algorithm stages that provides insight into what aspects are relevant for the learning of accurate channel charts is a challenging open research question. \revised{Improved channel features that are particularly resilient to shadowing and} more advanced CC algorithms, such as methods relying on metric learning or convolutional neural networks that take into account side information, have the potential to yield even better continuity and trustworthiness. Finally, an extension to semi-supervised methods,  time-varying 
channels, and multi-user scenarios is part of ongoing work.


\bibliographystyle{IEEEtran}
\bibliography{VIPabbrv,confs-jrnls,Massive_MIMO_bibfile}

%
\balance


\end{document}